\newcommand{\tg}{\mathcal{G}}
\newcommand{\tge}{\mathcal{E}}
\newcommand{\tcore}{\mathcal{C}}
\newtheorem{definition}{Definition}
\newtheorem{theorem}{Theorem}
\newtheorem{lemma}{Lemma}
\let\oldnl\nl%
\newcommand{\nonl}{\renewcommand{\nl}{\let\nl\oldnl}}%
\Crefname{theorem}{Theorem}{Theorems}
\crefname{definition}{Definition}{Definitions}
\tikzset{
	vertex/.style={circle,thick,draw,minimum size=4mm,inner sep=0pt,font=\footnotesize},
	treevertex/.style={circle,thick,draw,fill=white,minimum size=4mm,inner sep=0.5pt,font=\scriptsize},
	edge/.style={-,thick,font=\footnotesize},
	hledge/.style={-,thick,red,font=\footnotesize},
	hlbedge/.style={-,thick,blue,font=\footnotesize},
	hlgedge/.style={-,thick,Green,font=\footnotesize}
}
\newtheorem{appendixlemma}{Lemma}[section]
\newtheorem{appendixtheorem}{Theorem}[section]
\newtheorem{appendixdefinition}{Definition}[section]
\crefname{appendixlemma}{Lemma}{Lemmas}
\crefname{appendixtheorem}{Theorem}{Theorems}
\crefname{appendixdefinition}{Definition}{Definitions}
\begin{document}

\title{An Edge-Based Decomposition Framework for Temporal Networks}

\author{Lutz Oettershagen}
\email{lutz.oettershagen@liverpool.ac.uk}
\orcid{0000-0002-2526-8762}
\affiliation{%
	\institution{University of Liverpool}
	\city{Liverpool}
	\country{UK}
}
\author{Athanasios L. Konstantinidis}
\email{a.konstantinidis@uoi.gr}
\orcid{0009-0001-5566-5187}
\affiliation{%
	\institution{University of Ioannina}
	\city{Ioannina}
	\country{Greece}
}
\author{Giuseppe F. Italiano}
\email{gitaliano@luiss.it}
\orcid{0000-0002-9492-9894}
\affiliation{%
	\institution{LUISS University}
	\city{Rome}
	\country{Italy}
}

\begin{abstract}
A temporal network is a dynamic graph where every edge is assigned an integer time label that indicates at which discrete time step the edge is available.
We consider the problem of hierarchically decomposing the network and introduce an edge-based decomposition framework that unifies the core and truss decompositions for temporal networks while 
allowing us to consider the network's temporal dimension. Based on our new framework, we introduce the $(k,\Delta)$-core and $(k,\Delta)$-truss decompositions, which are generalizations of the classic $k$-core and $k$-truss decompositions for multigraphs. 
Moreover, we show how $(k,\Delta)$-cores and $(k,\Delta)$-trusses can be efficiently further decomposed to obtain spatially and temporally connected components. 
We evaluate the characteristics of our new decompositions and the efficiency of our algorithms. Moreover, we demonstrate how our $(k,\Delta)$-decompositions can be applied to analyze malicious content in a Twitter network to obtain insights that state-of-the-art baselines cannot obtain.
\end{abstract}

\keywords{Temporal graphs, Core decomposition, Truss decomposition}

\maketitle

\section{Introduction}

Temporal networks are ubiquitous and have gained increasing attention in recent years due to their ability to capture the dynamic nature of real-world systems~\cite{holme2015modern,wang2019time,nicosia2013graph,michail2016introduction,sarpe2021oden,holme2004structure}.
A temporal network consists of a set of nodes and a set of temporal edges. Each temporal edge exists only at a specific point in time, and the network's topology usually changes over time~\cite{holme2015modern,gionis2024mining}. 
In many applications, the temporal aspect is crucial for understanding the evolution and properties of the considered  systems~\cite{oettershagen2022temporal,tgh,enright2018epidemics,gendreau2015time,santoro2022onbra,hogg2012social,mastrandrea2015contact,klimt2004enron,shao2018anatomy}.
\begin{figure}\centering
	\begin{subfigure}{0.48\linewidth}
		\centering
		\includegraphics[width=1\linewidth]{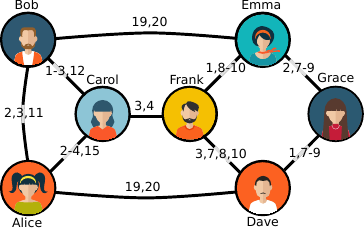}
		\caption{A temporal communication network. The edge labels denote the times of communication.}
		\label{fig:introexample1}
	\end{subfigure}\hfill%
	\begin{subfigure}{0.48\linewidth}
		\centering
		\includegraphics[width=1.0\linewidth]{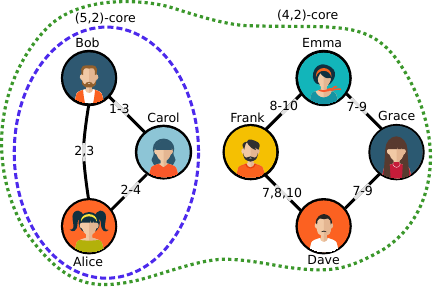}
		\caption{The edge-induced subgraph containing the two most inner cores for $\Delta=2$.}
		\label{fig:introexample2}
	\end{subfigure}
	\caption{Example of our edge-based core decomposition. 
	}
	\vspace{-3mm}
	\label{fig:introexample}
\end{figure}
A common task in data mining is to decompose a network into cohesive components, and the $k$-core and $k$-truss decompositions are useful and popular primitives for this task~\cite{kong2019k,malliaros2020core}.
A \emph{$k$-core} in a static graph $G$ is a maximal subgraph $G_k$ of $G$, such that every node in $G_k$ has at least $k$ neighbors in $G_k$~\cite{seidman1983network}. Moreover, a \emph{$k$-truss} in a static network is an edge-induced subgraph in which each edge is part of at least $k-2$ triangles~\cite{cohen2008trusses}. Both the $k$-core and $k$-truss decompositions find many applications in, e.g., social network analysis~\cite{shao2018anatomy,shin2016corescope}, community detection~\cite{huang2014querying,rossetti2018community}, and network visualization~\cite{alvarez2006k}. However, the static decompositions fail to decompose temporal networks in a useful way as they ignore the temporal aspects. For example, a node or edge that is at one point in time in an inner core or truss can be later at the periphery, i.e., in an outer core or truss, respectively.

In this work, we consider the problem of decomposing a given temporal network into hierarchically arranged 
\emph{temporal components} (cores or trusses). To this end, we first propose an edge-based decomposition framework to capture the temporal dynamics determined by the timestamps of the temporal edges. 
Based on our framework, we introduce the $(k,\Delta)$-core and $(k,\Delta)$-truss decompositions for temporal networks.
Each edge in a $(k,\Delta)$-core appears in the context of $k$ temporal edges with a temporal distance of at most $\Delta$. Specifically, each edge appearing at time $t$ is at both endpoints incident to at least $k$ temporal edges whose timestamps are at most $\Delta$ timesteps before or after time $t$.
Similarly, in an $(k,\Delta)$-truss, each edge appears in the temporal context of $k$ temporal triangles in which the pairwise temporal distances of the edges are upper bounded by $\Delta$.
For example, \Cref{fig:introexample1} shows a toy social network in which the timestamps of interactions between users are shown at each edge. \Cref{fig:introexample2} shows the hierarchical decomposition into two edge-induced temporal components, i.e., the $(5,2)$- and $(4,2)$-cores of the network.

Many previously proposed variants of temporal network decompositions~\cite{galimberti2020span,tgh,wu2015core,bai2020efficient,lotito2020efficient,momin2023kwiq,yang2023scalable,yu2021querying,lin2021mining} show limitations in different ways, e.g., by focusing only on determining static $k$-cores of $k$-trusses in temporal intervals, being not efficient enough for large-scale networks, or being not designed to work on highly dynamic temporal networks. 
Our work aims to overcome these restrictions.

\subsection{Our Contributions} 
\emph{1. Edge-based decomposition framework:}
We introduce a framework for edge-based decompositions of temporal networks. 
Based on this framework, we introduce our new $(k,\Delta)$-core and $(k,\Delta)$-truss decompositions to decompose temporal networks into hierarchically organized temporal cores or trusses, respectively.
We provide efficient algorithms for computing the $(k,\Delta)$-decompositions.

\noindent
\emph{2. Connected components:}
A temporal $(k,\Delta)$-core or $(k,\Delta)$-truss is, in general, not ``connected''. However, there is no standard definition of \emph{connectedness} in temporal graphs~\cite{bhadra2003complexity,casteigts2022simple,costa2023computing,nicosia2012components,kovanen2011temporal}.
We show that a variant of temporal reachability by~\citet{kovanen2011temporal}
can be used to identify \emph{$\Delta$-connected components} of $(k,\Delta)$-cores and $(k,\Delta)$-trusses. 
The decomposition can be computed in linear time and is useful because the connected components of a $(k,\Delta)$-core, or $(k,\Delta)$-truss, are themselves (non-maximal) $(k,\Delta)$-cores, or $(k,\Delta)$-trusses.

\noindent
\emph{3. Evaluation:}
We analyze our algorithms, compare them to existing state-of-the-art baselines, and show that our algorithms can efficiently handle large-scale and highly dynamic temporal networks for which the baselines fail. Finally, we demonstrate in a use-case how the $(k,\Delta)$-core decomposition can be applied to analyze malicious content in a Twitter network to obtain insights that state-of-the-art baselines cannot obtain.

\smallskip
Please refer to~\Cref{appendix:proofs} for the omitted proofs.

\subsection{Related Work}\label{sec:relatedwork}

Comprehensive overviews of temporal graphs are provided by~\cite{holme2015modern,wang2019time}.
Moreover, there are thorough surveys and introductions of core decompositions~\cite{kong2019k,malliaros2020core}.
Our decomposition framework is inspired by \citet{batagelj2011fast} who generalized the static $k$-core decomposition using monotone vertex property functions. 
Recently, there has been an increasing interest in decompositions of temporal networks---
\Cref{tab:overview}  shows an overview of related decompositions.
Wu et al.~\cite{wu2015core}  propose a $(k,h)$-core decomposition where each vertex in a $(k,h)$-core has at least $k$ neighbors, and there are at least $h$ temporal edges to each neighbor.
The authors of~\cite{bai2020efficient} extend the previous approach for maintaining the temporal $(k,h)$-cores under dynamic updates. 
Galimberti et al.~\cite{galimberti2020span} introduced temporal $(k,I)$-span-cores where each core is associated with a time interval $I$ for which the coreness property holds in each time step of the interval $I$. 
\begin{table}%
	\centering
	\caption{Overview of related decompositions, with $n$ and $m$ being the number of nodes and temporal edges, resp., $m_I$ ($m_A$) being the temporal edges in the interval $I$ (the aggregated graph, resp.), $\mathcal{T}$ is the interval spanned by the temporal graph, $\delta_m$ is the max.~degree.}
	\label{tab:overview}
	\resizebox{1\linewidth}{!}{\renewcommand{\arraystretch}{1.1}\setlength{\tabcolsep}{4pt}
		\begin{tabular}{lcll}\toprule
			\textbf{Variant}                     & \quad\textbf{Reference}\quad \mbox{} & \textbf{Running Time}  & \textbf{Description}  \\ \midrule
			Static $k$-core                      & \cite{seidman1983network}  &  $\mathcal{O}(n+m)$                           & static cores  \\ 
			Static $k$-truss                     & \cite{cohen2008trusses}    &  $\mathcal{O}(m^{1.5})$                       & static trusses  \\
			Historical $k$-core                  & \cite{yu2021querying}      & $\mathcal{O}(\log m + m_I)$                   & static cores spanning fixed interval\\
			Time-range $k$-core                  & \cite{yang2023scalable}    & $\mathcal{O}(\log m + |I|\cdot m_I)$          & static cores in fixed interval\\
			$(k,h)$-core                         & \cite{wu2015core}          & $\mathcal{O}(n+m)$                            & parallel temporal~edges  \\
			Span-core                            & \cite{galimberti2020span}  & $\mathcal{O}(|\mathcal{T}|^2\cdot m)$         & cores spanning intervals  \\
			Span-truss                           & \cite{lotito2020efficient} & $\mathcal{O}(|\mathcal{T}|^2 \cdot m^{1.5})$  & trusses spanning intervals  \\
			$(\eta,k)$-pseudocore                & \cite{tgh}                 & $\mathcal{O}(m \eta \cdot \delta_m)$          & based on temporal H-index \\
			$(L,K)$-lasting core                 & \cite{HungT21}             & $\mathcal{O}(nm^2\cdot L)$                    & $k\geq K$-core lasting $L$ steps  \\
			$(l, \delta)$-dense core             & \cite{qin2022mining}       & $\mathcal{O}(n\cdot|\mathcal{T}|)$            & min.~average degree in interval  \\
			$(\mu, \tau, \epsilon)$-stable core  & \cite{qin2020mining}       & $\mathcal{O}(m\cdot m_A)$                     & min.~\# similar neighb. in interval  \\  
			$(\theta,\tau)$-persistent $k$-c.    & \cite{li2018persistent}    & NP-hard                                       & persistence in sliding window  \\ 
			\midrule
			$\mathbf{(k,\Delta)}$\textbf{-core}  &\multirow{2}{*}{\textbf{This work}}&$\mathcal{O}(m\cdot \delta_m)$& based on temporal edge~degree  \\
			$\mathbf{(k,\Delta)}$\textbf{-truss} &&$\mathcal{O}(m\cdot \delta_m^2)$ & based on temporal edge~support  \\
			
			\bottomrule
		\end{tabular}
	}	
\end{table}
There is a quadratic number of time intervals for which a temporal span-core can exist, and the asymptotic running times of the proposed algorithms are in $\mathcal{O}(|\mathcal{T}|^2\cdot |E|)$ where $\mathcal{T}$ is the interval spanned by the temporal graph.
Hung et al.~\cite{HungT21} define a temporal community as a $(L,K)$-lasting core. A $(L,K)$-lasting core is a set of vertices that forms a $k$-core with $k\geq K$ that lasts for time $L$. The definition is similar to the one of the $(k,I)$-spanning core; however, here, the length of the interval $I$ is specified by $L$. Note that for both the $(L,K)$-lasting core and the $(k,I)$-spanning cores, the requirement that the $k$-core has to exist in each time step of the interval is often too restrictive as many real-world temporal networks are layer-wise sparse. 
Qin et al.~\cite{qin2022mining} address temporal communities as $(l, \delta)$-maximal dense core, which requires a core to maintain an average degree of at least $\delta$ throughout a time interval lasting no less than $l$ units. However, their proposed solution is infeasible for networks 
spanning a long interval $\mathcal{T}$
due to the space complexity in $\mathcal{O}(\alpha |\mathcal{T}|+|\tge|)$ with $\alpha$ being the maximum number of nodes in a core. 
\citet{tgh} introduces a (non-hierarchical) decomposition of temporal networks into so-called $(\eta,k)$-pseudocores, describing components with high communication capabilities, where $\eta$ is the depth of a recursive computation of a temporal H-index variant. 
\citet{qin2020mining} explore the concept of stable communities in temporal networks, introducing $(\mu, \tau, \epsilon)$-stable cores. In this context, a node is considered a part of a $(\mu, \tau, \epsilon)$-stable core if it maintains no fewer than $\mu$ neighbors, each exhibiting a similarity of at least $\epsilon$, across at least $\tau$ snapshots within the temporal network.
\citet{yu2021querying} discuss \emph{historical $k$-cores} which are the $k$-cores in the aggregated graphs with respect to time intervals, i.e., given a time interval $I$ the historical $k$-cores wrt.~$I$ are the $k$-cores of the aggregated graph spanned by $I$. Similarly, \citet{yang2023scalable} extend the work of \cite{yu2021querying} and introduce \emph{time-range k-cores queries} by allowing the resulting $k$-cores to be induced by any subinterval $I'\subseteq I$ of the time interval $I$.
The works of~\cite{yu2021querying,yang2023scalable} focus on efficiently answering queries for the standard $k$-core definition in given time intervals, and the authors introduce indexing techniques to answer such queries efficiently. Zhong et al.~\cite{zhong2024unified} propose a framework to unify such $k$-core queries for temporal networks.
In a different direction, the authors of~\cite{momin2023kwiq} introduce the concept of identifying \emph{core-invariant nodes} in temporal networks. A core invariant node keeps a core number above a given threshold within a certain time interval. 
\citet{li2018persistent} define the $(\theta,\tau)$-persistent $k$-core, which requires an intricate \emph{persistency} function to be maximized, leading to NP-hard optimization problem of finding the largest $(\theta,\tau)$-persistent $k$-core.
Finally, a closely related is the concept of the truss-decomposition. A $k$-truss is the maximal edge-induced subgraph in which each edge is part of at least $(k-2)$ triangles~\cite{cohen2008trusses}. The temporal $k$-truss~\cite{lotito2020efficient} is an extension of~\cite{galimberti2020span} using the $k$-truss concept instead of $k$-cores, where a $(k,I)$-span truss is a truss that exists in each time step of the interval $I$. 
However, similar to the span-core, there is a quadratic number of time intervals for which a temporal span-truss can exist leading to a asymptotic running time complexity in $\mathcal{O}(|\mathcal{T}|^2\cdot |E|^{1.5})$ rendering the approach infeasible for many real-world temporal networks.

\section{Preliminaries}\label{sec:preliminaries}

\begin{table}
	\caption{Commonly used notations}
	\label{table:notation}
	\centering%
	\resizebox{1\linewidth}{!}{\renewcommand{\arraystretch}{1.0}%
		\begin{tabular}{l@{\hspace{3mm}}l@{}}%
			\toprule
			\textbf{Symbol} & \textbf{Definition}
			\\\midrule
			$\tg=(V,\tge)$                   & Temporal graph $\tg$ with nodes $V$ and temporal edges $\tge$ \\
			$e=(\{u,v\},t)$                             & Temporal $\{u,v\}$-edge at time $t\in\mathbb{N}$\\
			$m=|\tge|$, $n=|V|$ & Numbers of temporal edges and nodes\\
			$\delta_m$ & Maximum degree in $\tg$ \\   
			
			$\mathcal{T}$ & Time interval spanned by graph \\
			$T(\tg)$ & Set of timestamps in $\tg$, i.e., $\{t \!\mid\!(u,v,t)\in \tge\}$\\
			$\varphi:\tge\times2^\tge\rightarrow~\mathbb{R}$ & Temporal edge weight function \\
			
			$\tcore^{\varphi}_{r}$ & Maximum edge-induced subgraph with $\varphi(e,\tge(\tcore^{\varphi}_{r}))\geq r$ \\
			
			$\Delta\in\mathbb{N}$ & Temporal distance\\
			$\Delta_m\in\mathbb{N}$ & Max. temporal distance of two edges at a node\\
			$d_\Delta(e,\tge')$ & $\Delta$-degree of edge $e\in\tge'$\\
			$s_\Delta(e,\tge')$ & $\Delta$-support of edge $e\in\tge'$\\
			
			$\tcore^{d_\Delta}_{k}$ & $(k,\Delta)$-core \\
			$\tcore^{s_\Delta}_{k}$ & $(k,\Delta)$-truss \\
			$c_\Delta(e)$ & Core number of $e\in \tge$\\
			$\tau_\Delta(e)$ & Truss number of $e\in \tge$\\
			
			$\xi$ & Maximum of $\Delta$-incident edges at any edge $e\in\tge$\\
			\bottomrule
	\end{tabular}}
\end{table}

\Cref{table:notation} in shows an overview of our notation.
A static \emph{(multi-)graph} $G=(V, E)$ consists of a finite set of nodes $V$ and a finite (multi-)set $E\subseteq\{\{u,v\}\subseteq V\mid u\neq v\}$ of undirected edges. 
We say that an edge $e=\{u,v\}$  is \emph{incident} to $u$ and $v$. 
The degree $\delta(v)$ of a node $v\in V$ is the number of edges incident to $v$.
Given a graph $G$ and $k\in\mathbb{N}$, a subgraph $H$ is a \emph{$k$-core} of $G$ if \emph{(i)} each vertex $u \in V(H)$ has degree  at least $k$ in $H$ and \emph{(ii)} $H$ is maximal. 

An \emph{temporal network} (or temporal graph) $\tg=(V, \tge)$ consists of a finite set of nodes $V$ and a finite set $\tge$ of undirected \emph{temporal edges} $e=(\{u,v\},t)$ with $u$ and $v$ in $V$, $u\neq v$, and \emph{timestamp} $t \in \mathbb{N}$. The timestamp specifies when the edge exists in the graph. 
We define $n=|V|$, $m=|\tge|$, and $T(\tg)$=$\{t\!\mid\!(\{u,v\},t)\in \tge\}$.
We use $\tge(\tg)$ to denote the temporal edges of $\tg$.
We assume that $n\leq 2m$, which holds unless isolated vertices exist. 
We denote with $\delta_m$ the maximum degree in $\tg$.
For a subset $\tge'\subseteq \tge$, we define the \emph{edge-induced subgraph} $\tg'=(V',\tge')$ with $V'=\{u,v\mid (\{u,v\},t)\in\tge'\}$; we may also write $\tg'\subseteq \tg$.
Finally, all our new definitions and algorithms can be easily adapted to respect a restrictive time interval $I=[\alpha,\beta]$ with $\alpha,\beta\in\mathbb{N}$ such that only temporal edges $e=(\{u,v\},t)\in\tge$ with $t\in I$ are considered.
We do not make the restrictive time interval explicit for ease of readability.

\section{An Edge-based Decomposition Framework}\label{sec:kdcore}
In this section, we introduce our new decomposition framework. The motivation origins from the following observations:
Given a temporal network $\tg=(V,\tge)$, each temporal edge $e=(\{u,v\},t_e)\in\tge$ has its timestamp $t_e$ determining the time of the existence of the edge. Hence, the temporal edges $\tge$ fully define the temporal dimension and properties of the network, while the nodes $V$ can be considered static objects. 
Hence, by defining a temporal edge-based decomposition framework, we can naturally account for the temporal dimension.
To this end, we first define the temporal edge weight function  $\varphi:\tge\times2^\tge\rightarrow~\mathbb{R}$ such that $\varphi(e,\tge')$ assigns a weight to edge $e\in\tge$ with respect to $\tge'\subseteq \tge$.
Based on $\varphi$, we decompose the temporal network.
\begin{definition}\label{def:kdeltacore_general}
	Given $r\in\mathbb{R}$ and $\varphi:\tge\times2^\tge\rightarrow~\mathbb{R}$, the $(r,\varphi)$-component of a temporal graph $\tg=(V,\tge)$ is the inclusion-maximal edge-induced subgraph $\tcore^{\varphi}_{r}\subseteq \tg$ such that 
	for each temporal edge $e\in\tge(\tcore^{\varphi}_{r})$ it holds $\varphi(e,\tge(\tcore^{\varphi}_{r}))\geq r$.
\end{definition}

The value $c_{\varphi}(e)$ of $e\in \tge$ is the maximum $r\in\mathbb{R}$ such that $e$ is in a $\tcore^{\varphi}_{r}$ component of $\tg$ but not in a $\tcore^{\varphi}_{r'}$ component with $r'>r$.
We call the function $\varphi$ \emph{monotone} if for 
$\tge_1\subseteq \tge_2 \subseteq \tge$ and for $e\in\tge$, $\varphi(e,\tge_1) \leq \varphi(e,\tge_2)$.
For a monotone function $\varphi$, assigning $c_{\varphi}(e)$ to the edges $e\in\tge$ induces a hierarchical decomposition of $\tge$. Moreover, monotonicity allows us to apply a simple edge peeling strategy to compute the decomposition as shown in \Cref{alg:coredecomp1}.

\begin{algorithm2e}[htb]
	\label[algorithm]{alg:coredecomp1}
	\caption{Decomposition Framework}
	\Input{Temporal graph $\tg=(V,\tge)$ and $\varphi$}
	\Output{$c_{\varphi}(e)$ for all $e\in \tge$}
	Initialize $c[e] = \varphi(e,\tge)$ for all $e\in\tge$ and $\tge'= \tge$\;
	\While{$\tge'\neq\emptyset$}{
		Choose $e\in\tge'$ with $c[e]$ minimal\;
		$\tge'\gets \tge' \setminus \{e\}$\;
		\For{all affected $f\in\tge'$ with $c[f]>c[e]$}{
			Update $c[f] \gets \max(c[e], \varphi(f,\tge'))$\;
		} 
	}
	\Return $c[e]$ for all $e\in \tge$
\end{algorithm2e}

\begin{theorem}\label{theorem:general}
	Given a temporal graph $\tg=(V,\tge)$ and a monotone function $\varphi$, \Cref{alg:coredecomp1} correctly computes $c_{\varphi}(e)$ of all edges $e\in\tge$.
\end{theorem}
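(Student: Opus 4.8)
The plan is to prove, for every threshold $r\in\mathbb{R}$, that the set of edges to which \Cref{alg:coredecomp1} assigns a final value at least $r$ coincides exactly with $\tge(\tcore^{\varphi}_{r})$; since $c_{\varphi}(e)$ is the largest $r$ with $e\in\tge(\tcore^{\varphi}_{r})$, matching these super-level sets across all $r$ forces the returned value to equal $c_{\varphi}(e)$. A preliminary step I would establish is that monotonicity makes $\tcore^{\varphi}_{r}$ well defined and nested: the union of two edge sets on which every edge has $\varphi$-weight at least $r$ again has this property (using $\varphi(e,\tge_1)\le\varphi(e,\tge_1\cup\tge_2)$), so a unique inclusion-maximal such set exists, and any $(r',\varphi)$-component with $r'\ge r$ is itself an $(r,\varphi)$-subgraph, hence contained in $\tcore^{\varphi}_{r}$. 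This yields $c_{\varphi}(e)\ge r \iff e\in\tge(\tcore^{\varphi}_{r})$.

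Next I would isolate two invariants of the peeling loop. First, the values at which edges are deleted are non-decreasing: when the minimal edge $e$ is removed, every surviving $f$ either kept its value $\ge c[e]$ or was reset to $\max(c[e],\varphi(f,\tge'))\ge c[e]$, so the next minimum cannot drop. Second, $c[f]\ge\varphi(f,\tge')$ holds for every $f$ in the current set $\tge'$ throughout: it holds with equality at initialization, and is preserved because an updated edge is set to a maximum dominating $\varphi(f,\tge')$, while an untouched edge only saw $\varphi(f,\cdot)$ decrease by monotonicity. A consequence I would record is that each $c[e]$ is non-increasing over the run.

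With these in hand, the inclusion $\tge(\tcore^{\varphi}_{r})\subseteq\{e:c[e]\ge r\}$ is the easy direction. If some component edge were deleted below $r$, let $e$ be the first such edge removed; by the non-decreasing property no component edge left earlier, so all of $\tge(\tcore^{\varphi}_{r})$ survives in $\tge'$. Monotonicity gives $\varphi(e,\tge')\ge\varphi(e,\tge(\tcore^{\varphi}_{r}))\ge r$, and the second invariant yields $c[e]\ge r$, contradicting the deletion of $e$ below $r$.

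The reverse inclusion is the crux and the main obstacle, precisely because the $\max(c[e],\cdot)$ ``flooring'' can leave $c[f]$ strictly above the current weight $\varphi(f,\tge')$, so one cannot simply read off $\varphi$-values at deletion time. To handle it I would show that $A_r:=\{e:c[e]\ge r\}$ is itself an $(r,\varphi)$-subgraph, whence $A_r\subseteq\tge(\tcore^{\varphi}_{r})$ by maximality. Fix $e\in A_r$. Because deletion values are non-decreasing, all edges of final value below $r$ are removed before every edge of $A_r$; call this the first phase, and note $c[e]\ge r$ throughout it since $c[\cdot]$ is non-increasing and $e$'s final value is at least $r$. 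The key observation is that during this phase every deleted $g$ satisfies $c[g]<r\le c[e]$, so whenever $g$ affects $e$ the update sets $c[e]=\max(c[g],\varphi(e,\tge'))$ with the floor strictly below $r$; as the result is still $\ge r$, the maximum must be attained by $\varphi(e,\tge')$, giving $c[e]=\varphi(e,\tge')\ge r$. After the last event affecting $e$ the weight $\varphi(e,\cdot)$ no longer changes until exactly $A_r$ remains, so $\varphi(e,A_r)\ge r$; the case where $e$ is never affected is identical, using that $c[e]$ stays equal to its initial $\varphi(e,\tge)=\varphi(e,A_r)\ge r$. This certifies $A_r$ as an $(r,\varphi)$-subgraph. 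Combining the two inclusions gives $\{e:c[e]\ge r\}=\tge(\tcore^{\varphi}_{r})$ for all $r$, hence $c[e]=c_{\varphi}(e)$.
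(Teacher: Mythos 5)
Your proof is correct, and it takes a genuinely different route from the paper's. The paper argues iteration-by-iteration along the removal sequence: when $e_j$ is selected as the current minimum with value $r_j$, all surviving edges have $c$-value at least $r_j$, and from this it directly asserts that $e_j$ lies in $\tcore^{\varphi}_{r_j}$ but in no higher component, so $c[e_j]=c_{\varphi}(e_j)$; it then separately argues that the tie-breaking order among edges with equal $c_{\varphi}$-value does not change the output. You instead prove the global identity $\{e : c[e]\ge r\}=\tge(\tcore^{\varphi}_{r})$ for every threshold $r$, via two explicit loop invariants (non-decreasing deletion values and $c[f]\ge\varphi(f,\tge')$) and a two-sided inclusion. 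Your route buys several things the paper leaves implicit: the preliminary closure-under-union argument actually establishes that the inclusion-maximal $(r,\varphi)$-component in \Cref{def:kdeltacore_general} exists and is unique under monotonicity, which the paper's definition silently assumes; and your analysis of the $\max(c[e],\varphi(f,\tge'))$ flooring explains precisely why the surviving set $A_r$ is itself an $(r,\varphi)$-subgraph, which is the step the paper compresses into a single ``hence.'' The paper's sequential argument is shorter and makes the order-independence of ties explicit as a standalone claim, whereas in your version order-independence falls out for free since the level sets are characterized independently of the removal order. Both are valid correctness proofs of the same peeling scheme; yours is the more self-contained one.
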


In the following, we define instances of the function $\varphi$ to develop our new core and truss decompositions as new essential primitives for temporal network analysis. We also provide efficient corresponding variants of \Cref{alg:coredecomp1}. It is important to note that our \Cref{def:kdeltacore_general} is a general framework as any temporal and non-temporal edge property can be used to potentially define additional variants of decompositions, allowing for the exploration of more advanced methods, such as distance-based~\cite{bonchi2019distance} or motif-based decompositions~\cite{sarpe2024scalable} in future work. 

\subsection{Temporal $(k,\Delta)$-Cores}
Next, we establish our specific temporal edge weight function to identify cohesive temporal subgraphs.
To this end, consider a single temporal edge representing an interaction between two vertices. A necessary condition for a temporal edge to be part of an inner core is that it occurs in the context of many other spatially and temporally local interactions. 
To capture this context, we define the degree of a temporal edge as follows.

\begin{definition}\label{def:deltadeg}
	Let $\tg=(V,\tge)$ be a temporal graph, $\Delta\in\mathbb{N}$, we define the $\Delta$-degree $d_\Delta:\tge\times2^\tge\rightarrow~\mathbb{N}$ as
	\begin{align*}
		d_\Delta(e,\tge')=\min (&|\left\{(\{u,w\},t')\in \tge' \mid |t-t'|\leq \Delta \right\}|,\\&|\left\{(\{v,w\},t')\in \tge' \mid |t-t'|\leq \Delta \right\}|).
	\end{align*}
	We denote two edges $e=(\{u,v\},t)$ and $f=(\{u,w\},t')$ with $|t-t'|\leq \Delta$ as $\Delta$-incident each other.
\end{definition}

Note that each edge counts in its own $\Delta$-degree as it is considered incident to itself.

\begin{lemma}\label{lemma:monotone}
	The $\Delta$-degree $d_\Delta$ is monotone.
\end{lemma}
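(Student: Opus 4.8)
The plan is to unwind the definition of monotonicity and reduce the claim to the elementary fact that set inclusion is preserved under a fixed membership predicate, together with the fact that the minimum of two non-decreasing quantities is itself non-decreasing. Fix a temporal edge $e=(\{u,v\},t)\in\tge$ and two edge sets $\tge_1\subseteq\tge_2\subseteq\tge$; by \Cref{def:deltadeg} it suffices to show $d_\Delta(e,\tge_1)\leq d_\Delta(e,\tge_2)$.

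First I would isolate the two counting sets appearing inside the $\min$. For an edge set $\tge'$ write $A_u(\tge')=\{(\{u,w\},t')\in\tge' : |t-t'|\leq\Delta\}$ and analogously $A_v(\tge')$, so that $d_\Delta(e,\tge')=\min(|A_u(\tge')|,|A_v(\tge')|)$. The key observation is that the predicate governing membership in $A_u$ (namely that the edge is incident to $u$ and its timestamp $t'$ satisfies $|t-t'|\leq\Delta$) depends only on the candidate edge and on the fixed timestamp $t$ of $e$, not on which ambient set the edge is drawn from. Hence, given $\tge_1\subseteq\tge_2$, every edge of $A_u(\tge_1)$ already lies in $\tge_2$ and still satisfies the same predicate, yielding $A_u(\tge_1)\subseteq A_u(\tge_2)$ and likewise $A_v(\tge_1)\subseteq A_v(\tge_2)$.

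Taking cardinalities gives $|A_u(\tge_1)|\leq|A_u(\tge_2)|$ and $|A_v(\tge_1)|\leq|A_v(\tge_2)|$. Since $\min$ is non-decreasing in each of its two arguments, I would conclude $\min(|A_u(\tge_1)|,|A_v(\tge_1)|)\leq\min(|A_u(\tge_2)|,|A_v(\tge_2)|)$, that is, $d_\Delta(e,\tge_1)\leq d_\Delta(e,\tge_2)$. As $e$ and the chain $\tge_1\subseteq\tge_2$ were arbitrary, this is precisely the monotonicity asserted for $d_\Delta$.

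There is essentially no hard step here: the statement is a one-line consequence of the inclusion-monotonicity of cardinality combined with the monotonicity of $\min$. The single point worth stating explicitly — and the only place where a careless argument could slip — is that the selection condition $|t-t'|\leq\Delta$ is evaluated against the fixed timestamp $t$ of $e$ and is unaffected when the surrounding edge set grows; this invariance is exactly what guarantees the clean inclusion $A_u(\tge_1)\subseteq A_u(\tge_2)$ rather than a comparison of two unrelated sets.
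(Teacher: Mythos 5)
Your proof is correct and follows essentially the same route as the paper's: both reduce the claim to the nesting of the two counting sets under $\tge_1\subseteq\tge_2$ and the monotonicity of $\min$. The only cosmetic difference is that the paper first splits off the trivial cases $e\notin\tge_2$ and $e\in\tge_2\setminus\tge_1$ (using the convention that the $\Delta$-degree is $0$ when $e$ is absent), whereas your argument handles all cases uniformly via the set inclusions; either way the conclusion is the same.
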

The $\Delta$-degree $d_\Delta$ as the temporal edge weight function $\varphi$ together with \Cref{def:kdeltacore_general} leads our new \emph{$(k,\Delta)$-core decomposition}. 
And with \Cref{lemma:monotone} and \Cref{theorem:general},
\Cref{alg:coredecomp1}  results in the \emph{core numbers} $c_\Delta$ for all edges and the temporal cores $\tcore^{d_\Delta}_{k}\subseteq \tg$. We will discuss a more efficient implementation in \Cref{sec:coreimpl}.
In the following, we call $\tcore^{d_\Delta}_{k}$ a $(k,\Delta)$-core and the edge-induced subgraph containing only edges with core number exactly $k$ the $(k,\Delta)$-shell of $\tg$.

The $(k,\Delta)$-core decomposition is a generalization of the classical static $k$-core decomposition for multigraphs.
To see this, we first define the \emph{edge $k$-core} for static multigraphs as follows. 
Given $k\in\mathbb{N}$, the edge $k$-core of a multigraph $G$ is the inclusion-maximal edge-induced subgraph $G_k$ of $G$ such that each endpoint $u,v$ of each edge $\{u,v\}$ in $G_k$ has at least $k+1$ neighbors. 
Hence, for $k>1$, a subgraph $H$ is a $k$-core if and only if $H$ is an edge $(k-1)$-core. 

\begin{theorem}\label{lemma:eqcorenum}
	Let $\tg=(V,\tge)$ be a temporal network and $D(u)=\max\{|t_1-t_2| \mid (\{u,v\},t_1), (\{u,w\},t_2) \in \tge \}\}$ be the maximum time difference over edges incident to the vertex $u\in V$. Moreover, let $\Delta_m=\max_{u\in V} D(u)$.
	For all $e\in\tge$, the core number $c_{\Delta_m}(e)$ equals the static edge core number $c_s(e)$. 
\end{theorem}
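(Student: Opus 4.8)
The plan is to show that at the threshold $\Delta=\Delta_m$ the temporal filter in \Cref{def:deltadeg} never excludes any edge, so that $d_{\Delta_m}$ collapses to the ordinary (multigraph) edge-degree; the two decompositions then coincide because they are the same instance of the framework of \Cref{def:kdeltacore_general} driven by identical weight functions. First I would record the key observation that justifies the choice of $\Delta_m$: if $e=(\{u,v\},t)$ and $f=(\{u,w\},t')$ are two temporal edges sharing the endpoint $u$, then $|t-t'|\le D(u)\le\Delta_m$, directly from the definitions of $D(u)$ and $\Delta_m$. Hence every edge incident to an endpoint of $e$ is automatically $\Delta_m$-incident to $e$, and symmetrically at $v$.

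Next I would turn this into a pointwise identity of weight functions. Fix any $\tge'\subseteq\tge$ and any $e=(\{u,v\},t)\in\tge'$. By the observation, the set $\{(\{u,w\},t')\in\tge'\mid|t-t'|\le\Delta_m\}$ is exactly the set of all edges of $\tge'$ incident to $u$, and likewise at $v$; since each temporal edge contributes one incidence (with multiplicity) to each of its endpoints, $d_{\Delta_m}(e,\tge')$ equals the minimum of the two endpoint degrees of $e$ in the static multigraph induced by $\tge'$. This is precisely the edge-degree that underlies the static edge-core decomposition, so $d_{\Delta_m}(\cdot,\cdot)$ and the static edge-degree agree on every pair $(e,\tge')$.

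Finally, since the two weight functions are literally the same function on $\tge\times2^{\tge}$, the static edge-core decomposition is the instance of \Cref{def:kdeltacore_general} obtained by taking $\varphi=d_{\Delta_m}$. Invoking \Cref{lemma:monotone} together with \Cref{theorem:general}, the peeling in \Cref{alg:coredecomp1} returns the same value for every edge under either reading, which yields $c_{\Delta_m}(e)=c_s(e)$ for all $e\in\tge$. I expect the only delicate step to be the bookkeeping that makes this an equality \emph{on the nose} rather than up to a fixed shift: one must reconcile the self-incidence convention of \Cref{def:deltadeg} (where $e$ counts in its own degree) with the ``$k+1$'' normalization used to define the static edge $k$-core, so that the admissibility thresholds---and therefore the resulting core numbers---line up exactly. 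The substantive content, namely the vacuity of the temporal constraint at $\Delta_m$ and the consequent collapse of $d_{\Delta_m}$ to the static degree, is the decisive and straightforward part; monotonicity then propagates the equality through the whole hierarchy.
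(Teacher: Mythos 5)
Your proposal is correct and follows essentially the same route as the paper's own proof, which likewise observes that for $\Delta=\Delta_m$ the constraint $|t-t'|\leq\Delta$ is vacuous for every pair of incident temporal edges, so the $\Delta$-degree collapses to the static multigraph edge-degree and the decompositions coincide. Your version is in fact more careful than the paper's two-sentence argument: you make the pointwise identity of weight functions on all of $\tge\times 2^{\tge}$ explicit before invoking \Cref{theorem:general}, and you rightly flag the reconciliation of the self-incidence convention in \Cref{def:deltadeg} with the ``$k+1$'' normalization of the static edge $k$-core --- a bookkeeping point the paper's proof passes over in silence.
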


Hence, our new $(k,\Delta)$-core generalizes the conventional static edge $k$-core (and hence the node-based static $k$-core) if we choose $\Delta\geq \Delta_m$.
However, note that by using smaller values of $\Delta<\Delta_m$, increasingly higher time resolutions can be considered, which is impossible for the static (and other temporal) core decompositions.

Moreover, the $(k,\Delta)$-core allows us to identify the hierarchical organization of temporal edges within the network 
due to the following containment property.

\begin{theorem}\label{theorem:containment}
	Let $\tg=(V,\tge)$ be  a temporal network, $k,k',\Delta,\Delta'\in\mathbb{N}$ with $k\leq k'$, and $\Delta \geq \Delta'$.
	Furthermore, let $\tcore^{d_\Delta}_{k}$ and $\tcore^{d_{\Delta'}}_{k'}$ be a $(k,\Delta)$- and a $(k',\Delta')$-core, respectively.
	Then $\tcore^{d_\Delta}_{k} \subseteq \tcore^{d_{\Delta'}}_{k'}$.
\end{theorem}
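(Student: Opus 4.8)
The plan is to derive the inclusion from the inclusion-maximality in \Cref{def:kdeltacore_general} together with two monotonicity properties of the $\Delta$-degree: the edge-set monotonicity of \Cref{lemma:monotone}, and a separate monotonicity in the distance parameter $\Delta$. By maximality of the target core, to establish $\tcore^{d_\Delta}_{k}\subseteq\tcore^{d_{\Delta'}}_{k'}$ it suffices to exhibit the edge set $\tge(\tcore^{d_\Delta}_{k})$ as an edge-induced subgraph all of whose edges satisfy the defining $(k',\Delta')$-inequality on that same edge set; maximality of $\tcore^{d_{\Delta'}}_{k'}$, together with the edge-set monotonicity of \Cref{lemma:monotone} (which guarantees the maximal core absorbs any subgraph meeting its defining property), then forces $\tge(\tcore^{d_\Delta}_{k})$ inside it. The whole argument thus reduces to a single pointwise verification: for every $e\in\tge(\tcore^{d_\Delta}_{k})$, the value $d_{\Delta'}(e,\tge(\tcore^{d_\Delta}_{k}))$ meets the threshold $k'$.

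The two ingredients I would use for this verification are as follows. First, \Cref{lemma:monotone} lets me compare $\Delta$-degrees across different edge sets, so that enlarging or shrinking the induced edge set changes each edge's degree monotonically; this is the device that links the two thresholds $k$ and $k'$. Second, for the distance parameter I would use that whenever $\Delta'\le\Delta$ every $\Delta'$-incident edge is also $\Delta$-incident, since the condition $|t-t'|\le\Delta'$ is stronger than $|t-t'|\le\Delta$; taking the minimum over the two endpoints in \Cref{def:deltadeg} then yields a pointwise comparison between $d_{\Delta'}$ and $d_\Delta$ on any fixed edge set. Combining this pointwise comparison with the threshold relation $k\le k'$ is precisely what should convert an edge's $(k,\Delta)$-membership certificate into a $(k',\Delta')$-membership certificate.

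To organize the composition cleanly, I would route through an intermediate core in which only one parameter is changed at a time, and chain two single-parameter inclusions: one link that fixes the distance and moves the threshold, and one link that fixes the threshold and moves the distance. Each link is a one-parameter core nesting that follows from maximality plus the corresponding monotonicity above, and the two links compose to the claimed inclusion. Decoupling the two effects in this way keeps each step to a single invocation of \Cref{lemma:monotone} or of the $\Delta$-incidence comparison, which is cleaner than trying to change both parameters simultaneously.

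The main obstacle will be the direction in which the two parameter changes act on core size. Increasing the degree threshold tends to peel edges away and shrink a core, whereas decreasing the distance parameter shrinks the $\Delta$-incidence neighbourhoods and hence lowers the degrees themselves; under $k\le k'$ and $\Delta\ge\Delta'$ these two shifts do not obviously reinforce one another. The delicate point is therefore to verify that the pointwise degree comparison and the threshold comparison combine so as to preserve each edge's membership certificate in the required direction rather than work against each other. I expect this bookkeeping---carried out on the intermediate core so that only one monotonicity is invoked per link---to be the heart of the argument, after which the two maximality absorptions are routine.
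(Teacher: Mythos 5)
Your argument funnels everything into a single pointwise verification---that every $e\in\tge(\tcore^{d_\Delta}_{k})$ satisfies $d_{\Delta'}(e,\tge(\tcore^{d_\Delta}_{k}))\geq k'$---and that claim is false. Membership in the $(k,\Delta)$-core only certifies $d_\Delta(e,\cdot)\geq k$, and both of the monotonicities you correctly identify run \emph{against} you here: $\Delta'\leq\Delta$ gives $d_{\Delta'}(e,\tge')\leq d_{\Delta}(e,\tge')$ pointwise on any fixed edge set, and $k\leq k'$ raises the threshold, so from a $(k,\Delta)$-certificate nothing whatsoever follows about the $(k',\Delta')$-condition. The paper's own running example makes this concrete: in \Cref{fig:tgexamplec} the $(2,5)$-core is the $(3,5)$-core together with the nonempty $(2,5)$-shell, so the $(2,5)$-core is \emph{not} contained in the $(3,5)$-core even though $k=2\leq k'=3$ and $\Delta=\Delta'=5$. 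The obstacle you flag in your final paragraph (``these two shifts do not obviously reinforce one another'') is therefore not bookkeeping that an intermediate-core decomposition can absorb; no proof of the inclusion as printed exists, because the stated containment direction is a typo. The paper's appendix proof in fact establishes the reverse inclusion $\tcore^{d_{\Delta'}}_{k'}\subseteq\tcore^{d_\Delta}_{k}$: it shows that every edge of the \emph{more} constrained core remains in the \emph{less} constrained one.

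The fix is that your ingredients are exactly right once pointed in the opposite direction. For $e\in\tge(\tcore^{d_{\Delta'}}_{k'})$ one has $d_{\Delta}(e,\tge(\tcore^{d_{\Delta'}}_{k'}))\geq d_{\Delta'}(e,\tge(\tcore^{d_{\Delta'}}_{k'}))\geq k'\geq k$, since every $\Delta'$-incident edge is $\Delta$-incident; thus the $(k',\Delta')$-core is an edge-induced subgraph satisfying the $(k,\Delta)$-condition on its own edge set, and the inclusion-maximality in \Cref{def:kdeltacore_general} (via your absorption step, which is sound) places it inside $\tcore^{d_\Delta}_{k}$. Now both comparisons reinforce rather than fight, which is why the ``delicate point'' you anticipated evaporates. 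Your plan of chaining through an intermediate core, changing one parameter per link, is precisely the structure of the paper's case analysis (first $k<k'$ with $\Delta=\Delta'$, then $k=k'$ with $\Delta>\Delta'$, then the composition), so after reversing the direction your proposal collapses into essentially the paper's proof.
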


We give an example of $(k,\Delta)$-cores in \Cref{sec:comparison}.

\subsubsection{Efficient Computation}\label{sec:coreimpl}
Based on the containment property (\Cref{theorem:containment}), 
our $(k,\Delta)$-core decomposition can be efficiently computed for a fixed value of $\Delta$ by adapting the greedy \emph{peeling} algorithm introduced by~\citet{batagelj2003m}.
\Cref{alg:coredecomp} shows our edge-peeling algorithm for computing the $(\Delta,k)$-cores.
The algorithm removes a temporal edge with the lowest $\Delta$-degree in each iteration. To this end it uses three arrays $a_u[e], a_v[e]$, and $d[e]$ to store for each temporal edge $e=(\{u,v\},t)\in\tge$ the current numbers of $\Delta$-incident edges at the endpoints $u$ and $v$ and their minimum, respectively, i.e.,
after initialization (line 1-6), $d[e]$ equals the minimum number of $\Delta$-incident edges at the endpoints of edge $e$.
Let $\{e_1,\ldots,e_m\}$ be the sequence in which the edges are processed by the for loop in line 8. %
In the $i$-th round, \Cref{alg:coredecomp} processes $e_i$ with $d[e_i]$, i.e., the edge with the lowest $\Delta$-degree currently remaining in the graph.
Now, because for $e_i$ the value of $d[e_i]$ is the lowest, each edge $e$ remaining in $\tg$, has at least $d[e]\geq d[e_i]$ $\Delta$-incident edges at both endpoints. Therefore, $e_i$ is part of a maximal edge-induced subgraph in which each edge has at both endpoints at least $d[e_i]$ $\Delta$-incident edges, i.e., $c_\Delta(e_i) = d[e_i]$. 
The loop in line \ref{alg:coredecomp:for2} processes each temporal edge $e_i$ once in order of minimal $\Delta$-degree, and $d[e_i]$ will not be changed after $e_i$ is processed due to line \ref{alg:coredecomp:if1}.
After the loop in line \ref{alg:coredecomp:for2} ends, the algorithm returns $d$, i.e., the core numbers for all $e\in\tge$.
Now let $\xi$ be the maximum of $\Delta$-incident edges at any edge $e\in\tge$.
In each iteration, we may have to update the value $d[f]$ for each of the at most $\xi$ $\Delta$-incident edges of $e$. 
Determining these edges is possible in $\mathcal{O}(\log \delta_m)$ by storing the edges at each vertex in chronologically ordered incidence lists.
Finally, updating the bin position of $f$ takes only constant time. 

\begin{algorithm2e}[htb]
	\label[algorithm]{alg:coredecomp}
	\caption{$(k,\Delta)$-core decomposition}
	\Input{Temporal graph $\tg=(V,\tge)$ and $\Delta\in\mathbb{N}$}
	\Output{Core number $c_\Delta(e)$ for all $e\in \tge$}
	Initialize~$a_u[e]=0$ and $a_v[e]\!=\!0$ for all $e=(\{u,v\},t)\in \tge$\;
	Initialize $d[e]=0$ for all $e\in \tge$ \;%
	\For{$e=(\{u,v\},t)\in \tge$}{\label{alg:coredecomp:for1}
		$a_u[e]\gets |\left\{(\{u,w\},t')\in \tge \mid |t-t'|\leq \Delta \right\}|$\;
		$a_v[e]\gets |\left\{(\{v,w\},t')\in \tge \mid |t-t'|\leq \Delta \right\}|$\;
		$d[e]=\min(a_u[e],a_v[e])$\;
	}
	Bin sort edges $\tge$ in increasing order of $d[e]$\;
	\For{$e=(\{u,v\},t)\in \tge$ in sorted order }{\label{alg:coredecomp:for2}
		\For{ $f=(\{x,w\},t')\in\tge$ with $x\in \{u,v\}$}{
			\If{$|t-t'|\leq \Delta$ \KwAnd $d[f]>d[e]$}{\label{alg:coredecomp:if1}
				$a_x[f]\gets a_x[f]-1$\;
				$d[f]\gets \min(a_x[f],a_w[f])$\;
				Update the bin position of $f$\;%
			}
		}
		remove $e$ from $\tg$\;
	}
	\Return $d$
\end{algorithm2e}

\begin{theorem}\label[theorem]{theorem:kdcore}
	Given a temporal graph $\tg=(V,\tge)$ and $\Delta\in\mathbb{N}$,
	\Cref{alg:coredecomp} computes the $(k,\Delta)$-core numbers of all $e\in\tge$  correctly in $\mathcal{O}(m\cdot \max(\log\delta_m,\xi))$ time and $\mathcal{O}(m)$ space.
\end{theorem}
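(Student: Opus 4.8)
The plan is to prove the statement in two parts: first that \Cref{alg:coredecomp} is correct, i.e., that the value $d[e]$ it finally returns for each edge $e$ equals the true core number $c_\Delta(e)$, and then that the stated time and space bounds hold. For correctness I would first record two structural facts about the peeling. \emph{(i)} The sequence of assigned values $v_1\le v_2\le\cdots\le v_m$, where $v_i=d[e_i]$ at the moment $e_i$ is extracted by the for-loop in line~\ref{alg:coredecomp:for2}, is non-decreasing: whenever $d[f]$ is decremented in line~\ref{alg:coredecomp:if1} the guard $d[f]>d[e_i]$ together with the $\min$-update lets it drop only down to $d[e_i]$ and never below, so every edge still present after $e_i$ has tracked value $\ge v_i$ and hence the next extracted minimum is $\ge v_i$. \emph{(ii)} Since the inner loop only ever decrements counters, the tracked $a_u[e],a_v[e]$ are at all times upper bounds on the true numbers of remaining $\Delta$-incident edges at the endpoints of $e$. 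Using these I would prove $c_\Delta(e_i)=v_i$ by two matching inequalities.

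For the \emph{upper bound} $c_\Delta(e_i)\le v_i$, suppose towards a contradiction that $e_i$ lies in some $(k,\Delta)$-core $\tcore^{d_\Delta}_{k}$ with $k>v_i$. Let $e_j$ be the first edge of this core in extraction order; since $e_i$ is in the core, $e_j$ is extracted no later than $e_i$, so $v_j\le v_i$ by \emph{(i)}. When $e_j$ is extracted none of the core's edges has been removed yet (they are all extracted at or after $e_j$), so every decrement to a counter $a_x[e_j]$ so far was caused by a removed, hence non-core, edge; therefore each $a_x[e_j]$ still counts all core edges $\Delta$-incident to $e_j$ at endpoint $x$. With the core property and the monotonicity of $d_\Delta$ (\Cref{lemma:monotone}) this gives $v_j=d[e_j]=\min_x a_x[e_j]\ge d_\Delta(e_j,\tge(\tcore^{d_\Delta}_{k}))\ge k$, contradicting $v_j\le v_i<k$.

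For the \emph{lower bound} $c_\Delta(e_i)\ge v_i$, I would exhibit an edge-induced subgraph containing $e_i$ in which every edge has $\Delta$-degree at least $v_i$. The candidate is the suffix set $U_i=\{e\in\tge:\text{assigned value}\ge v_i\}$, which by \emph{(i)} is exactly the set of edges extracted at or after the first edge of value $v_i$, and which contains $e_i$. Fix $e_j\in U_i$ and an endpoint $x$. The key point is that although the guard $d[f]>d[e]$ lets the algorithm \emph{over-count} (it skips decrements for ties), any $\Delta$-incident edge $g$ at $x$ with assigned value strictly less than $v_i$ is extracted before $e_j$ and, at that moment, $d[e_j]\ge v_j\ge v_i>v_g$, so the guard fires and $g$ is decremented out of $a_x[e_j]$. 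Hence every edge still counted in the final $a_x[e_j]\ge v_j\ge v_i$ is $\Delta$-incident to $e_j$ and belongs to $U_i$, which yields $d_\Delta(e_j,U_i)\ge v_i$ for every $e_j\in U_i$. Thus $U_i$ is contained in the $(v_i,\Delta)$-core and $c_\Delta(e_i)\ge v_i$. I expect this reconciliation of the over-counting induced by the $d[f]>d[e]$ guard with the core definition to be the main obstacle, as it is the only place where the tie-breaking optimization could in principle produce a wrong value.

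Finally, for the complexity I would account for the three phases. Storing the temporal edges in chronologically ordered incidence lists (built once in $\mathcal{O}(m\log\delta_m)$) lets the initialization (lines~1--6) compute each window count $a_u[e],a_v[e]$ by locating the interval $[t-\Delta,t+\Delta]$ with a binary search in $\mathcal{O}(\log\delta_m)$, for $\mathcal{O}(m\log\delta_m)$ in total. The bin sort over the range $[0,\xi]$ (note $d[e]\le\xi\le m$) costs $\mathcal{O}(m+\xi)=\mathcal{O}(m)$. In the main loop each edge $e_i$ is extracted once; its at most $\xi$ $\Delta$-incident edges are located by a binary search ($\mathcal{O}(\log\delta_m)$) and then scanned, and each update (decrement, $\min$, bin move) is $\mathcal{O}(1)$ with a bucketed doubly linked structure, so the per-edge cost is $\mathcal{O}(\log\delta_m+\xi)=\mathcal{O}(\max(\log\delta_m,\xi))$, giving $\mathcal{O}(m\cdot\max(\log\delta_m,\xi))$ overall. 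For space, the arrays $a_u,a_v,d$, the incidence lists (of total length $\sum_v\delta(v)=2m$), and the buckets all use $\mathcal{O}(m)$ (using $n\le 2m$), which dominates and yields the claimed $\mathcal{O}(m)$ bound.
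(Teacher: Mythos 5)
Your proposal is correct and follows essentially the same peeling argument as the paper's proof, establishing that each extracted edge's tracked value equals its core number and then accounting for the binary-search initialization, bin-sort, and per-edge update costs. Your version is in fact more rigorous than the paper's: the explicit two-sided inequality (the first-extracted-core-edge contradiction for the upper bound, and the suffix set $U_i$ as a witness subgraph for the lower bound, including the reconciliation of the over-counting caused by the $d[f]>d[e]$ guard) makes precise the invariant that the paper only asserts informally.
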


\subsection{Temporal $(k,\Delta)$-Trusses} 
The $k$-truss in a static network is the maximal subgraph where each edge is part of at least $k-2$ triangles, aiming to enhance cohesiveness compared to the $k$-core by requiring stronger local connectivity.
We now introduce our temporal truss variant by first defining an edge weighting function counting the number of temporally local triangles in which a temporal edge participates.
\begin{definition}\label{def:deltatrianglesupp}
	Let $\tg=(V,\tge)$ be a temporal graph, $\Delta\in\mathbb{N}$, we define the $\Delta$-support $s_\Delta:\tge\times2^\tge\rightarrow~\mathbb{N}$ as
	\begin{align*}
		s_\Delta(e,\tge')=&|\{ \{e_i,e_j\} \mid  e_i=(\{u,w\},t_1),e_j=(\{v,w\},t_2)\in\tge' \\&\quad\text{ with } u\neq v\neq w, |t-t_1|\leq\Delta, |t-t_2|\leq\Delta \\&\quad\text{ and }|t_1-t_2|\leq\Delta\}|.
	\end{align*}
\end{definition} 

\begin{lemma}\label{lemma:trussmonotone}
	The $\Delta$-support $s_\Delta$ is monotone.
\end{lemma}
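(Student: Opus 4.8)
The plan is to show directly from \Cref{def:deltatrianglesupp} that enlarging the edge set can only increase (never decrease) the $\Delta$-support of any fixed edge. Recall the definition of monotonicity stated in the excerpt: $s_\Delta$ is monotone if for all $\tge_1 \subseteq \tge_2 \subseteq \tge$ and all $e \in \tge$, we have $s_\Delta(e,\tge_1) \leq s_\Delta(e,\tge_2)$. So I fix an edge $e=(\{u,v\},t)$ and two edge sets $\tge_1 \subseteq \tge_2$, and I compare the two counts.

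The core observation is that $s_\Delta(e,\tge')$ counts unordered pairs $\{e_i,e_j\}$ of edges drawn from $\tge'$ satisfying a condition that depends \emph{only} on the two edges $e_i,e_j$ and the fixed edge $e$ (namely that $e_i=(\{u,w\},t_1)$, $e_j=(\{v,w\},t_2)$ share a common third vertex $w$ with $u\neq v\neq w$, and the three pairwise timestamp differences $|t-t_1|$, $|t-t_2|$, $|t_1-t_2|$ are all at most $\Delta$). Crucially, this predicate does not reference the ambient set $\tge'$ beyond requiring membership $e_i,e_j\in\tge'$. Therefore I would define the set
\begin{equation*}
	S(\tge') = \bigl\{ \{e_i,e_j\} : e_i,e_j\in\tge' \text{ satisfy the triangle condition for } e \bigr\},
\end{equation*}
so that $s_\Delta(e,\tge') = |S(\tge')|$, and argue that $S(\tge_1)\subseteq S(\tge_2)$.

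The inclusion $S(\tge_1)\subseteq S(\tge_2)$ is the heart of the argument and follows immediately from $\tge_1\subseteq\tge_2$: if $\{e_i,e_j\}\in S(\tge_1)$, then $e_i,e_j\in\tge_1\subseteq\tge_2$, and the geometric/temporal condition in \Cref{def:deltatrianglesupp} is unchanged since it involves no other edges, so $\{e_i,e_j\}\in S(\tge_2)$. Taking cardinalities and noting $S(\tge_1)\subseteq S(\tge_2)$ gives $s_\Delta(e,\tge_1)=|S(\tge_1)|\leq|S(\tge_2)|=s_\Delta(e,\tge_2)$, which is exactly monotonicity. Since $e$, $\tge_1$, and $\tge_2$ were arbitrary, the claim holds.

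I do not expect any genuine obstacle here; the result is essentially definitional, mirroring the proof of \Cref{lemma:monotone} for the $\Delta$-degree. The only point requiring mild care is making explicit that the triangle-membership predicate is \emph{local}, i.e.\ determined solely by $e$ and the candidate pair $\{e_i,e_j\}$ and not by the surrounding edge set—this is what guarantees that no pair counted in the smaller set can fail to be counted in the larger one. Once that observation is stated, the set-containment argument closes the proof.
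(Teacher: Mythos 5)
Your proposal is correct and uses essentially the same argument as the paper: the set of qualifying pairs $\{e_i,e_j\}$ drawn from $\tge_1$ is contained in the corresponding set for $\tge_2$ because the triangle predicate depends only on $e$ and the pair, so the cardinalities are ordered. The paper's proof additionally spells out the trivial boundary cases where $e\notin\tge_1$ or $e\notin\tge_2$ (taking the support to be $0$ there), which you omit, but that does not affect the substance of the argument.
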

By using the $\Delta$-support as the temporal edge weight function $\varphi$ in our decomposition framework (\Cref{def:kdeltacore_general}), we obtain our new \emph{$(k,\Delta)$-truss decomposition}. 
Following \Cref{lemma:trussmonotone} and \Cref{theorem:general}, 
\Cref{alg:coredecomp1} with function $s_\Delta$ results in the \emph{truss numbers} $\tau_\Delta$ for all edges and the \emph{temporal trusses} $\tcore^{s_\Delta}_{k}\subseteq \tg$.
In the following, we call $\tcore^{s_\Delta}_{k}$ a $(k,\Delta)$-truss. 
The $(k,\Delta)$-truss decomposition is a generalization of the conventional static $k$-truss decomposition (with the truss numbers shifted by two) in multigraphs by setting $\Delta\geq \Delta_m$ as defined in \Cref{lemma:eqcorenum}.
Furthermore, \Cref{theorem:containment} holds analogously for $(k,\Delta)$-trusses (replace $d_\Delta$ with $s_\Delta$) allowing for a hierarchical decomposition over $k$ as well as $\Delta$.
The $(k,\Delta)$-truss decomposition can be efficiently computed (we provide an algorithm in \Cref{appendix:trussalg}).
\begin{theorem}\label{theorem:kdtruss}
	Given $\tg=(V,\tge)$,  $\Delta\in\mathbb{N}$, and $s_\Delta^{\max}=\max_{e\in \tge}s_\Delta(e,\tge)$,
	we can compute the $(k,\Delta)$-truss numbers of all $e\in\tge$  correctly in $\mathcal{O}(m\cdot \max(\log\delta_m,\xi^2))$ time and $\mathcal{O}(\max(m, s_\Delta^{\max}))$ space. 
\end{theorem}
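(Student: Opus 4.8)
The plan is to mirror the proof of \Cref{theorem:kdcore}, replacing the $\Delta$-degree by the $\Delta$-support and the vertex-incidence bookkeeping by triangle bookkeeping. Correctness of the underlying peeling strategy is immediate: by \Cref{lemma:trussmonotone} the support $s_\Delta$ is monotone, so \Cref{theorem:general} guarantees that \Cref{alg:coredecomp1} with $\varphi=s_\Delta$ returns the truss numbers $\tau_\Delta$. It then remains to show that a specialized edge-peeling algorithm (the one given in \Cref{appendix:trussalg}) realizes this framework within the claimed time and space bounds.

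For the specialized algorithm I would maintain an array $s[e]$ holding the current $\Delta$-support of each edge with respect to the set of surviving edges. After initializing $s[e]=s_\Delta(e,\tge)$ and bin-sorting the edges by $s[e]$, the algorithm repeatedly removes a surviving edge $e$ of minimum support. The correctness argument is the same as in the core case: when $e$ is peeled, every surviving edge $f$ satisfies $s[f]\ge s[e]$, so $e$ lies in a maximal edge-induced subgraph in which each edge has $\Delta$-support at least $s[e]$; hence $\tau_\Delta(e)=s[e]$, and this value is frozen once $e$ is removed. Removing $e=(\{u,v\},t)$ destroys exactly the $\Delta$-triangles through $e$, so for each such triangle with partner edges $f$ and $g$ we decrement $s[f]$ and $s[g]$ (only when they exceed $s[e]$) and update their bin positions; this maintains the invariant that $s[\cdot]$ equals the support in the surviving graph.

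For the running time I would store the edges incident to each vertex in chronologically ordered incidence lists, so the window of edges $\Delta$-incident to a given edge at one endpoint is located by binary search in $\mathcal{O}(\log\delta_m)$. Both the initial support computation and the triangle enumeration during peeling reduce to examining, for an edge $e=(\{u,v\},t)$, the pairs consisting of one $\Delta$-incident edge at $u$ and one at $v$; matching their third endpoint $w$ and checking the three pairwise $\Delta$-constraints of \Cref{def:deltatrianglesupp} identifies the $\Delta$-triangles through $e$. Since each endpoint has at most $\xi$ such incident edges, this costs $\mathcal{O}(\xi^2)$ per edge, and each found triangle triggers only $\mathcal{O}(1)$ support decrements and bin updates. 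Summing the $\mathcal{O}(\log\delta_m)$ search cost and the $\mathcal{O}(\xi^2)$ enumeration cost over all $m$ edges yields the claimed $\mathcal{O}(m\cdot\max(\log\delta_m,\xi^2))$ bound, since each edge's incident-pair set is enumerated only a constant number of times (once at initialization and once when the edge is peeled). The space bound follows because the arrays and incidence lists use $\mathcal{O}(m)$ space, while the bin structure indexed by support values needs $\mathcal{O}(\max(m,s_\Delta^{\max}))$ cells.

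The main obstacle I anticipate is the triangle bookkeeping during peeling: one must argue that enumerating the $\Delta$-triangles through the removed edge and applying the decrements is both complete (no affected edge is missed) and efficient (bounded by $\mathcal{O}(\xi^2)$ pair examinations with $\mathcal{O}(1)$ updates each), and that all three temporal constraints are verified during the pair matching. The remaining ingredients---monotonicity, the peeling correctness invariant, and the binary-search window location---transfer directly from the $(k,\Delta)$-core analysis of \Cref{theorem:kdcore}.
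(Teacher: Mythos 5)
Your proposal is correct and follows essentially the same route as the paper's proof: the same peeling invariant carried over from the $(k,\Delta)$-core argument, initialization and bin-sort by $\Delta$-support, binary search over chronologically ordered incidence lists to locate $\Delta$-incident edges, $\mathcal{O}(\xi^2)$ triangle enumeration per peeled edge with constant-time decrements and bin updates, and the same accounting for the $\mathcal{O}(\max(m,s_\Delta^{\max}))$ space. The only cosmetic difference is where you charge the $s_\Delta^{\max}$ term (the bin structure rather than the enumerated triangle-pair set), which does not affect the bound.
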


\subsection{Identifying Connected Components}\label{sec:deltaconnectedness}

After computing the $(k,\Delta)$-cores or $(k,\Delta)$-trusses of a temporal graph $\tg$, 
we consider the $(k,\Delta)$-core or $(k,\Delta)$-truss $\tcore^{\star}_{k}$ with $\star\in\{d_\Delta,s_\Delta\}$
as the by the edges $e$ with core number $c_\Delta(e)\geq k$, or truss number $\tau_\Delta(e)\geq k$, respectively, induced subgraph of $\tg$.
A natural question that arises is if $\tcore^{\star}_{k}$ is \emph{connected}.
However, connectivity in temporal graphs is less clearly defined than in conventional static graphs~\cite{bhadra2003complexity,casteigts2022simple,costa2023computing,nicosia2012components}.
Commonly, temporal connectivity is based on temporal reachability, which itself is defined using \emph{temporal walks}~\cite{bhadra2003complexity,nicosia2012components}. A  temporal walk is a sequence of connected temporal edges with (strictly) increasing timestamps to capture the possible \emph{flow of information}.
The requirement of increasing timestamps leads to inherent non-symmetric temporal walks, reflecting the fact that information cannot flow backward in time.
This non-symmetry (together with non-transitivity) often makes determining temporal connected components a hard problem. 
For this reason, in the context of temporal motif mining, \citet{kovanen2011temporal} loosened the restriction of increasing timestamps and defined \emph{$\Delta$-walks} in terms of \emph{temporal locality} to ensure symmetry and transitivity.

\begin{definition}[\citet{kovanen2011temporal}]
	A {$\Delta$-walk} $\omega$ in a temporal graph $\tg=(V,\tge)$ is a sequence of $\ell\in\mathbb{N}$ temporal edges $\omega=(e_1=(\{v_1,v_2\},t_1),e_2=(\{v_2,v_3\},t_2)\ldots, e_\ell=(\{v_{\ell},v_{\ell+1}\},t_\ell))$ for which $|t_i-t_{i+1}|\leq \Delta$ for all $1\leq i<\ell$, i.e., $e_i$ and $e_{i+1}$ for $1\leq i <\ell$ are $\Delta$-incident.
	Moreover, we say that an edge $e_j\in \tge$ is $\Delta$-reachable from edge $e_i\in \tge$ if there exists a $\Delta$-walk from $e_i$ to $e_j$.
\end{definition}

For $\Delta\geq \Delta_m$ as defined in \Cref{lemma:eqcorenum}, $\Delta$-reachability equals conventional reachability. For smaller $\Delta$ values, $\Delta$-reachability captures temporal locality motivated by the fact that in many real-world scenarios, the significance of past or future events, e.g., human interactions or information dissemination, tends to diminish with time passing by. For example, in a social network, the impact of interactions from several years ago is usually less influential on an individual's current preferences and decisions compared to recent interactions. Similarly, events in the far future are usually less relevant to the current situation than events that will happen soon. 

It is easy to see that the $\Delta$-reachability leads to a decomposition of the temporal network that is an equivalence relation, i.e., satisfying {reflexivity, symmetry}, and {transitivity.}
We define the $\Delta$-connected components of a temporal network $\tg=(V,\tge)$ (or a subgraph like a $(k,\Delta)$-core or $(k,\Delta)$-truss) as the maximal subset $\tge'\subseteq \tge$ of temporal edges such that all edges in $\tge'$ are pairwise $\Delta$-reachable.

Using $\Delta$-connectedness, we can further decompose $(k,\Delta)$-cores or $(k,\Delta)$-trusses into $\Delta$-connected  $(k,\Delta)$-cores or $(k,\Delta)$-trusses.

\begin{theorem}\label{theorem:dccs}
	Let $\mathcal{C}_k^\star$ with $\star\in\{d_\Delta,s_\Delta\}$ be a $(k,\Delta)$-core or $(k,\Delta)$-truss. The $\Delta$-connected components of $\mathcal{C}_k^\star$ are non-inclusion-maximal $(k,\Delta)$-cores or $(k,\Delta)$-trusses, respectively.
\end{theorem}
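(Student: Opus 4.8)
The plan is to show that each $\Delta$-connected component $C$ of $\mathcal{C}_k^\star$ inherits the defining property of $\mathcal{C}_k^\star$, i.e., every edge of $C$ still has $\Delta$-degree (resp.\ $\Delta$-support) at least $k$ when this quantity is measured within $C$ alone. The engine of the argument is the observation that $\Delta$-incidence is a one-step special case of $\Delta$-reachability: if $e$ and $f$ are $\Delta$-incident, then the length-two sequence $(e,f)$ is itself a $\Delta$-walk, so $f$ is $\Delta$-reachable from $e$, and hence $e$ and $f$ lie in the same $\Delta$-connected component.

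For the core case ($\star=d_\Delta$), I would fix a component $C$ and an arbitrary edge $e\in\tge(C)$. By \Cref{def:deltadeg}, every edge counted toward $d_\Delta(e,\tge(\mathcal{C}_k^{d_\Delta}))$ at either endpoint of $e$ is $\Delta$-incident to $e$, and therefore lies in $C$ by the observation above. Consequently the sets of edges counted at $u$ and at $v$ are the same whether the $\Delta$-degree is evaluated over the full core or over $C$, so $d_\Delta(e,\tge(C))=d_\Delta(e,\tge(\mathcal{C}_k^{d_\Delta}))\geq k$, where the inequality holds because $e\in\mathcal{C}_k^{d_\Delta}$. As $e$ was arbitrary, $C$ satisfies the $(k,\Delta)$-core condition.

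The truss case ($\star=s_\Delta$) is analogous. For any pair $\{e_i,e_j\}$ counted in $s_\Delta(e,\cdot)$ as in \Cref{def:deltatrianglesupp}, both $e_i$ and $e_j$ are $\Delta$-incident to $e$ and thus lie in $C$, so every such pair counted in the full truss survives inside $C$, giving $s_\Delta(e,\tge(C))=s_\Delta(e,\tge(\mathcal{C}_k^{s_\Delta}))\geq k$. Finally, non-inclusion-maximality follows because $C\subseteq\mathcal{C}_k^\star$ and $\mathcal{C}_k^\star$ already meets the defining condition; whenever $\mathcal{C}_k^\star$ decomposes into more than one component, $C$ is a proper subgraph of the larger subgraph $\mathcal{C}_k^\star$ satisfying the same property. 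I do not expect a genuine obstacle here: the only point requiring care is verifying that every edge (resp.\ every edge of a counted pair) contributing to the degree or support is $\Delta$-incident to $e$, so that the component-containment argument applies; this is immediate from the definitions, the only mild subtlety being that $s_\Delta$ counts unordered pairs rather than single edges, which does not affect the reasoning since both members of each pair are $\Delta$-incident to $e$.
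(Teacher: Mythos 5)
Your proposal is correct and follows essentially the same route as the paper's proof: both argue that every edge (or every edge of a counted triangle pair) contributing to $d_\Delta(e,\cdot)$ or $s_\Delta(e,\cdot)$ is $\Delta$-incident to $e$, hence $\Delta$-reachable from $e$ and therefore contained in the same $\Delta$-connected component, so the $\Delta$-degree or $\Delta$-support is unchanged when restricted to that component. Your explicit remark that a $\Delta$-incident pair forms a length-two $\Delta$-walk is a slightly more detailed justification of the step the paper states directly, but the argument is the same.
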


\Cref{fig:tgexamplef} shows the two $2$-components of the $(2,2)$-core shown in \Cref{fig:tgexampleb}. 
The $\Delta$-connected components can be computed using a recursive algorithm in linear time~\cite{kovanen2011temporal}.
In \Cref{appendix:ccs}, we propose a new and simple linear-time algorithm based on transforming the temporal graph into a static representation.

\section{Comparison of Decompositions}\label{sec:comparison}

\begin{figure*}\centering
	\begin{subfigure}{0.25\linewidth}\centering
		\begin{tikzpicture}
			\node[treevertex,color=black,fill=white] (a) at (0, 1.6) [circle, draw] {$a$};
			\node[treevertex,color=black,fill=white] (c) at (1.6, 1.6) [circle, draw] {$c$};
			\node[treevertex,color=black,fill=white] (b) at (0, 0) [circle, draw] {$b$};
			\node[treevertex,color=black,fill=white] (d) at (1.6, 0) [circle, draw] {$d$};

			\draw[edge] (a) -- (b) node[midway, left] {$1,20$};
			\draw[edge] (b) -- (c) node[pos=0.25,above,sloped] {$3,8$};
			\draw[edge] (a) -- (c) node[midway, above] {$1,22$};
			\draw[edge] (c) -- (d) node[midway, right] {$6,20$};
			\draw[edge] (a) -- (d) node[pos=0.25,above,sloped] {$4,10$};
			\draw[edge] (d) -- (b) node[midway, below] {$6,23$};
		\end{tikzpicture}
		\caption{Temporal graph $\tg$.}
		\label{fig:tgexamplea}
	\end{subfigure}\hfill%
	\begin{subfigure}{0.35\linewidth}\centering
		\begin{tikzpicture}
			\node at (0.2, 1.8) {$(2,2)$-core};
			\node at (2.6, 1.8) {$(1,2)$-shell};

			\node[treevertex,color=black,fill=white] (a1) at (0, 1.2) [circle, draw] {$a$};
			\node[treevertex,color=black,fill=white] (c1) at (1.2, 1.2) [circle, draw] {$c$};
			\node[treevertex,color=black,fill=white] (b1) at (0, 0) [circle, draw] {$b$};
			\node[treevertex,color=black,fill=white] (d1) at (1.2, 0) [circle, draw] {$d$};
			
			\node[treevertex,color=black,fill=white] (a2) at (2.4, 1.2) [circle, draw] {$a$};
			\node[treevertex,color=black,fill=white] (c2) at (3.6, 1.2) [circle, draw] {$c$};
			\node[treevertex,color=black,fill=white] (b2) at (2.4, 0) [circle, draw] {$b$};
			\node[treevertex,color=black,fill=white] (d2) at (3.6, 0) [circle, draw] {$d$};

			\draw[edge] (a1) -- (b1) node[midway, left] {$1$};
			\draw[edge] (b1) -- (c1) node[pos=0.5,above,sloped] {$3,8$};
			\draw[edge] (a1) -- (c1) node[midway, above] {$1$};
			\draw[edge] (c1) -- (d1) node[midway, right] {$6$};
			\draw[edge] (d1) -- (b1) node[midway, below] {$6$};            
			\draw[edge] (a2) -- (b2) node[midway, left] {$20$};
			\draw[edge] (a2) -- (c2) node[midway, above] {$22$};
			\draw[edge] (c2) -- (d2) node[midway, right] {$20$};
			\draw[edge] (a2) -- (d2) node[pos=0.5,above,sloped] {$4,10$};
			\draw[edge] (d2) -- (b2) node[midway, below] {$23$};
		\end{tikzpicture}
		\caption{$(k,\Delta)$-core decomposition for $\Delta=2$.}
		\label{fig:tgexampleb}
	\end{subfigure}\hfill%
	\begin{subfigure}{0.4\linewidth}\centering
		\begin{tikzpicture}
			\node at (0.2, 1.8) {$(3,5)$-core};
			\node at (2.6, 1.8) {$(2,5)$-shell};
			\node at (5, 1.8) {$(1,5)$-shell};
			\node[treevertex,color=black,fill=white] (a1) at (0, 1.2) [circle, draw] {$a$};
			\node[treevertex,color=black,fill=white] (c1) at (1.2, 1.2) [circle, draw] {$c$};
			\node[treevertex,color=black,fill=white] (b1) at (0, 0) [circle, draw] {$b$};
			\node[treevertex,color=black,fill=white] (d1) at (1.2, 0) [circle, draw] {$d$};
			
			\node[treevertex,color=black,fill=white] (a2) at (2.4, 1.2) [circle, draw] {$a$};
			\node[treevertex,color=black,fill=white] (c2) at (3.6, 1.2) [circle, draw] {$c$};
			\node[treevertex,color=black,fill=white] (b2) at (2.4, 0) [circle, draw] {$b$};
			\node[treevertex,color=black,fill=white] (d2) at (3.6, 0) [circle, draw] {$d$};
			
			\node[treevertex,color=black,fill=white] (a3) at (4.8, 1.2) [circle, draw] {$a$};
			\node[treevertex,color=black,fill=white] (d3) at (6, 0) [circle, draw] {$d$};

			\draw[edge] (a1) -- (b1) node[midway, left] {$1$};
			\draw[edge] (b1) -- (c1) node[pos=0.27,above,sloped] {$3,8$};
			\draw[edge] (a1) -- (c1) node[midway, above] {$1$};
			\draw[edge] (c1) -- (d1) node[midway, right] {$6$};
			\draw[edge] (a1) -- (d1) node[pos=0.75,above,sloped] {$4$};
			\draw[edge] (d1) -- (b1) node[midway, below] {$6$};            
			\draw[edge] (a2) -- (b2) node[midway, left] {$20$};
			\draw[edge] (a2) -- (c2) node[midway, above] {$22$};
			\draw[edge] (c2) -- (d2) node[midway, right] {$20$};
			\draw[edge] (d2) -- (b2) node[midway, below] {$23$};
			
			\draw[edge] (a3) -- (d3) node[pos=0.5,above,sloped] {$10$};
		\end{tikzpicture}
		
		\caption{$(k,\Delta)$-core decomposition for $\Delta=5$.}
		\label{fig:tgexamplec}
	\end{subfigure}
	
	\vspace{3mm}
	\begin{subfigure}{0.33\linewidth}\centering
		\begin{tikzpicture}
			\node at (0.2, 1.8) {$(2,5)$-truss};

			\node[treevertex,color=black,fill=white] (a1) at (0, 1.2) [circle, draw] {$a$};
			\node[treevertex,color=black,fill=white] (c1) at (1.2, 1.2) [circle, draw] {$c$};
			\node[treevertex,color=black,fill=white] (b1) at (0, 0) [circle, draw] {$b$};
			\node[treevertex,color=black,fill=white] (d1) at (1.2, 0) [circle, draw] {$d$};

			\draw[edge] (a1) -- (b1) node[midway, left] {$1$};
			\draw[edge] (b1) -- (c1) node[pos=0.26,above,sloped] {$3$};
			\draw[edge] (a1) -- (c1) node[midway, above] {$1$};
			\draw[edge] (c1) -- (d1) node[midway, right] {$6$};
			\draw[edge] (a1) -- (d1) node[pos=0.25,above,sloped] {$4$};
			\draw[edge] (d1) -- (b1) node[midway, below] {$6$};            
		\end{tikzpicture}
		\caption{$(k,\Delta)$-truss decomposition for $\Delta=5$.}
		\label{fig:tgexampled}
	\end{subfigure}%
	\begin{subfigure}{0.33\linewidth}\centering
		\begin{tikzpicture}
			\node at (1.0, 1.8) {$\Delta$-ccs of the $(2,2)$-core};

			\node[treevertex,color=black,fill=white] (a1) at (0, 1.2) [circle, draw] {$a$};
			\node[treevertex,color=black,fill=white] (c11) at (1.2, 1.2) [circle, draw] {$c$};
			\node[treevertex,color=black,fill=white] (b11) at (0, 0) [circle, draw] {$b$};    \node[treevertex,color=black,fill=white] (c12) at (3.2, 1.2) [circle, draw] {$c$};
			\node[treevertex,color=black,fill=white] (b12) at (2.0, 0) [circle, draw] {$b$};
			\node[treevertex,color=black,fill=white] (d1) at (3.2, 0) [circle, draw] {$d$};

			\draw[edge] (a1) -- (b1) node[midway, left] {$1$};
			\draw[edge] (b11) -- (c11) node[pos=0.5,below,sloped] {$3$};
			\draw[edge] (b12) -- (c12) node[pos=0.5,above,sloped] {$8$};
			\draw[edge] (a1) -- (c1) node[midway, above] {$1$};
			\draw[edge] (c12) -- (d1) node[midway, right] {$6$};
			\draw[edge] (d1) -- (b12) node[midway, below] {$6$};            
		\end{tikzpicture}
		
		\caption{$2$-connected components of the $(2,2)$-core.}
		\label{fig:tgexamplef}
	\end{subfigure}
	
	\caption{Examples of the $(k,\Delta)$-core and $(k,\Delta)$-truss decompositions and $\Delta$-connected components.}
\end{figure*}

We use a toy communication network to compare our new $(k,\Delta)$-decompositions to the static variants and state-of-the-art temporal core and truss decompositions.
\Cref{fig:tgexamplea} shows the temporal graph in which nodes communicate at different times, shown at the edges. 
Each timestamp corresponds to a temporal edge.

\Cref{fig:tgexampleb} shows the $(2,2)$-core and the $(1,2)$-shell. Similarly, \Cref{fig:tgexamplec} shows the $(3,5)$-core and the $(2,5)$- and $(1,5)$-shells. Each temporal edge in the $(3,5)$-core has a $5$-degree of at least three. For example, both endpoints of the temporal edge $(\{a,b\},1)$, are incident to at least three edges $(\{u,v\},t)$ such that $|t-1|\leq 5$. In the case of $a$, these edges are $(\{a,b\},1)$, $(\{a,c\},1)$, and $(\{a,d\},4)$. And for endpoint $b$, the $\Delta$-incident edges are $(\{a,b\},1)$, $(\{b,c\},3)$, and $(\{b,d\},6)$. Similarly, in \Cref{fig:tgexampled}, the $(2,5)$-truss of $\tg$ is shown in which each edge has a $5$-support of at least two.
The $(3,5)$-core reflects pair-wise communications of all four nodes occurring in the interval $[1,8]$, whereas the $(2,5)$-shell shows (non-pairwise) communications occurring in the later interval $[20,23]$.
Note that the $(2,5)$-\emph{core} consist of the $(3,5)$-core together with the $(2,5)$-shell.
\Cref{fig:tgexamplef} shows the $\Delta$-connected components of the $(2,2)$-core, highlighting
the communications between the nodes $\{a,b,c\}$ and $\{b,c,d\}$ in the two distinct intervals $[1,3]$ and $[6,8]$.
It is essential to mention that for our $(k,\Delta)$-core decomposition, one of the key differences to traditional static and temporal core decompositions is that our approach decomposes the network on an edge basis instead of node-wise, often leading to a more fine-grained decomposition. 
We now compare our approach to other decompositions.

\textbf{Static cores and trusses:} A natural question is if the temporal information in the temporal network $\tg$ is necessary or if a purely static network decomposition suffices. Using the conventional static $k$-core decomposition computed on the underlying aggregated graph (or underlying multigraph) results in each node having a core number of three (or six, resp.), showing that the static approach ignores the network's temporal dimension and does not suffice.
Similar arguments hold for the static truss decomposition; for example, the underlying aggregated graph is a static 4-truss.

\textbf{Interval queries:} The approaches that query a static $k$-core during an interval~\cite{yu2021querying,yang2023scalable} can, e.g., identify the $(3,5)$-core shown in \Cref{fig:tgexamplec} by carefully choosing the interval $[1,8]$. However, it is not possible to obtain the, e.g., $(2,2)$-core using the interval $[1,8]$ as this would lead to the inclusion of the temporal edge $(\{a,d\},4)$ as temporal (non-)locality is not considered in the subgraph.

\textbf{$\mathbf{(k,h)}$-cores:} For the $(k,h)$-core decomposition~\cite{wu2015core}, each vertex in a $(k,h)$-core has at least $k$ neighbors, and there are at least $h$ temporal edges to each neighbor.
Note that the actual timestamps or their relative distances to each other are not considered. Hence the temporal network $\tg$ itself is a $(k,h)$-core for $k=3$ and $h=2$.

\textbf{Span cores and trusses:}
For the spanning decompositions (i.e., span-core~\cite{galimberti2020span}, span-truss~\cite{lotito2020efficient}, $(L,K)$-lasting core~\cite{HungT21}) the cores (trusses, resp.) need to exist for some interval $I$ requiring all edges of the core to exist at each timestamp of $I$. This requirement is, in many cases, too restrictive, as we see in the example graph (\Cref{fig:tgexamplea}), which does not contain any non-trivial spanning core (truss) even for the minimum interval length of $|I|=1$.

\smallskip

In conclusion, using the simple temporal network $\tg$ shown in \Cref{fig:tgexamplea}, we can already verify that the discussed baselines lead to significantly different or trivial decompositions. Similar arguments or examples are also possible for other approaches.
As most baselines are node-based, they are usually unable to achieve the fine-grained decomposition on an edge basis as our approach does.
Finally, in \Cref{sec:usecase}, we discuss an empirical use-case of analyzing malicious retweets in a subnetwork of the Twitter graph, and we show that our $(k,\Delta)$-decompositions lead to insights that cannot be obtained using the state-of-the-art baselines.

\section{Experiments}\label{sec:experiments}
We compare our new decompositions with state-of-the-art baselines and 
discuss an use-case using the $(k,\Delta)$-core and $(k,\Delta)$-truss decompositions for analyzing malicious tweets.

\subsection{Experimental Setup}

All experiments run on a computer cluster. Each experiment had an exclusive node with an Intel(R) Xeon(R) Gold 6130
CPU @ 2.10GHz and 96~GB of RAM. We used a time limit of 12 hours.
We implemented our algorithms in C++ using GNU CC Compiler~11.4.0 with the flag \texttt{--O3}\footnote{The source code is available at~\textcolor{black}{\url{https://gitlab.com/tgpublic/tgkd}}.}.
We denote our implementation of \Cref{alg:coredecomp} as \texttt{$(k,\Delta)$-Core} and our implementation of our $(k,\Delta)$-truss algorithm (see \Cref{appendix:trussalg}) as \texttt{$(k,\Delta)$-Truss}.

\subsubsection{Baselines} We use the following state-of-the-art core decomposition baselines:
\begin{itemize}[leftmargin=5.5mm]
	\item \texttt{Stat-$k$-C} is the static $k$-core decomposition algorithm~\cite{batagelj2003m} in which we ignore all time stamps and $\tg$ is interpreted as undirected multilayer graph.
	\item \texttt{$(k,h)$-C} is the $(k,h)$-core decomposition~\cite{wu2015core}. 
	We use the implementation provided by~\cite{oettershagen2022tglib} and $h\in\{2,4,8\}$. 
	\item \texttt{PC-$\eta$} is the $(\eta,k)$-pseudocore decomposition~\cite{tgh}. We use the implementation provided by the authors and $\eta\in\{8,16,32\}$.
	\item \texttt{SpanC} and \texttt{SpanT} are the maximal~span-core and span-truss decompositions~\cite{galimberti2020span,lotito2020efficient}. We use the implementations provided by the authors.
	\item \texttt{$(L,K)$-C} is the algorithm for the lasting $k$-core with $k\geq K$~\cite{HungT21}. 
	This algorithm only returns the $k$-cores with $k\geq K$ and a maximal number of nodes lasting for exactly a time duration of length $L$. We set $L=1$ minute and $k=2$. 
	We use the implementation provided by~\cite{oettershagen2022tglib}.
	\item \texttt{$(l, \delta)$-C} is the algorithm for the maximal dense core~\cite{qin2022mining}. We use the implementation provided by the authors and the proposed default values for the parameters of $l=3$ and $\delta=3$.
\end{itemize}

\subsubsection{Data sets} We use eight real-world network data sets of different sizes and from various domains.
\Cref{table:datasets_stats2} gives an overview. 
Further details are provided in \Cref{appendix:dataset}.

\begin{table}%
	\centering
	\caption{Statistics of the data sets.}  
	\label{table:datasets_stats2}
	\resizebox{1\linewidth}{!}{ \renewcommand{\arraystretch}{1} %
		\begin{tabular}{llrrrcr}\toprule
			&\textbf{Data~Set}     & $|V(\tg)|$ & $|E(\tg)|$ & \textbf{Span} & \textbf{Domain} & \textbf{Ref.} \\ 
			\cmidrule(lr){1-7} \multirow{4}{*}{\rotatebox{90}{\textsf{Small}}}
			&\emph{FacebookMsg}    &   1.9K &  59.8K & 194 days &     social network & \cite{panzarasa2009patterns} \\
			&\emph{Enron}          &  86.8K &   1.1M &  4 years &      email network &        \cite{klimt2004enron} \\ 
			&\emph{AskUbuntu}      & 134.0K & 257.3K &  7 years & question answering &   \cite{paranjape2017motifs} \\
			&\emph{Twitter}        & 346.1K &   2.1M & 176 days &           retweets &       \cite{shao2018anatomy} \\ 
			\cmidrule(lr){1-7} \multirow{4}{*}{\rotatebox{90}{\textsf{Large}}}
			&\emph{Wikipedia}      &   1.9M &  40.0M &  6 years &         co-editing &   \cite{paranjape2017motifs} \\
			&\emph{StackOverflow}  &   2.6M &  48.0M &  8 years & question answering &     \cite{mislove2009online} \\ 
			&\emph{Reddit}         &   3.0M &  84.3M &  9 years &     social network &     \cite{hessel2016science} \\
			&\emph{Bitcoin}        &  48.1M & 111.0M &  7 years &          financial &   \cite{Kondor-2014-bitcoin} \\
			\bottomrule
		\end{tabular}
	}
\end{table}

\begin{table*}%
	\centering
	\caption{Statistics for the chosen values of $\Delta$.}  
	\label{table:props}
	\resizebox{1\linewidth}{!}{ \renewcommand{\arraystretch}{1} 
		\begin{tabular}{lrrrrrrrrrrrrrrrrrrrr}\toprule
			&\multicolumn{4}{c}{\textbf{Duration}}&\multicolumn{4}{c}{\textbf{Avg. $\Delta$-degree}}&\multicolumn{4}{c}{\textbf{Max. $\Delta$-degree}}&\multicolumn{4}{c}{\textbf{Avg. $\Delta$-support}}&\multicolumn{4}{c}{\textbf{Max. $\Delta$-support}}\\\cmidrule(lr){2-5}\cmidrule(lr){6-9}\cmidrule(lr){10-13}\cmidrule(lr){14-17}\cmidrule(lr){18-21}
			\textbf{Data~Set}     & $\Delta_{10\%}$ & $\Delta_{25\%}$ & $\Delta_{50\%}$ & $\Delta_{75\%}$ & $\Delta_{10\%}$ & $\Delta_{25\%}$ & $\Delta_{50\%}$ & $\Delta_{75\%}$ & $\Delta_{10\%}$ & $\Delta_{25\%}$ & $\Delta_{50\%}$ & $\Delta_{75\%}$ & $\Delta_{10\%}$ & $\Delta_{25\%}$ & $\Delta_{50\%}$ & $\Delta_{75\%}$ & $\Delta_{10\%}$ & $\Delta_{25\%}$ & $\Delta_{50\%}$ & $\Delta_{75\%}$   \\ 
			\midrule
			\emph{FacebookMsg}    & 34.0s &  1.6m &  9.6m &  3.4h & 1.12 & 1.41 & 3.13 & 10.46 &  28 &  28 &  30 &   141 & 0.0003 &  0.004 & 0.12 &  2.11 &   2 &   12 &  137 &  1285 \\
			\emph{Enron}          &  9.0m &  1.0h & 10.0h &  2.8d & 1.20 & 1.73 & 4.16 & 12.42 & 229 & 554 & 727 &  1255 &   0.07 &   0.35 & 2.81 & 22.96 & 452 & 1237 & 1962 & 13079 \\ 
			\emph{AskUbuntu}      & 12.4m &  1.5h &  1.1d & 16.9d & 1.02 & 1.13 & 1.54 &  3.52 &  13 &  27 &  31 &   272 & 0.0002 &  0.003 & 0.02 &  0.26 &   1 &    7 &   14 &   145 \\
			\emph{Twitter}        &  5.0s & 21.0s &  6.4m & 18.7h & 1.03 & 1.17 & 1.79 &  5.54 &   7 &  22 & 218 & 14459 & 0.0000 & 0.0001 & 0.01 &  2.83 &   1 &   15 &  531 & 16017 \\ 
			\emph{Wikipedia}      &  1.0d &  2.0d &  7.0d & 31.0d & 5.23 & 6.37 & 9.82 & 17.78  & 1043 & 1638 & 4472 & 16881 & 0.19 & 0.23 & 0.35 & 0.75 & 731 & 731 & 804 & 1530 \\
			\emph{StackOverflow}  &  1.8m &  5.9m & 39.4m & 18.2h & 1.07 & 1.33 & 2.41 & 5.35 & 13 & 21 & 37 & 183 & 0.0049 & 0.04 & 0.25 & 0.60  & 9 & 30 & 203 & 600 \\ 
			\emph{Reddit}         &  1.1m &  4.4m & 32.0m & 10.2h & 1.03 & 1.23 & 2.26 & 6.22 & 53& 63&240 &3146 & 0.0004 & 0.0066 & 0.10 & 0.78 & 10 & 86 & 1260 & 9564 \\
			\emph{Bitcoin}        &  8.3m & 21.9m &  1.8h & 19.8h & 1.79 & 2.32 & 5.16 & 27.01 & 611 & 1572 & 3917 & 17289 & 0.12 & 0.25 & 1.59 & 68.87 & 213 & 711 & 5259 & 111904 \\
			\bottomrule
		\end{tabular}
	}
\end{table*}

\begin{table*}[htb]
	\centering
	\caption{The running times in seconds (OOT--out of time (12h), OOM--out of memory (96GB)).}  
	\label{table:kdcorerunningtimes2}
	\resizebox{1\linewidth}{!}{ \renewcommand{\arraystretch}{1} %
		\begin{tabular}{lrrrrrrrrrrrrrrrrrrr}\toprule
			&\multicolumn{11}{c}{\textbf{Baselines}}&\multicolumn{4}{c}{\textbf{{\texttt{$(k,\Delta)$-Core}}}}&\multicolumn{4}{c}{\textbf{{\texttt{$(k,\Delta)$-Truss}}}}\\
			\cmidrule(lr){2-12}\cmidrule(lr){13-16}\cmidrule(lr){17-20}
			\textbf{Data~Set} & \texttt{Stat-$k$-C} & \texttt{$(k,2)$-C}  &  \texttt{$(k,4)$-C} & \texttt{$(k,8)$-C} & \texttt{PC-$8$} & \texttt{PC-$16$} & \texttt{PC-$32$} &\texttt{SpanC} & \texttt{SpanT} &\texttt{$(L,K)$-C} &\texttt{$(l, \delta)$-C}
			& $\Delta_{10\%}$ & $\Delta_{25\%}$ & $\Delta_{50\%}$ & $\Delta_{75\%}$ & $\Delta_{10\%}$ & $\Delta_{25\%}$ & $\Delta_{50\%}$ & $\Delta_{75\%}$
			\\ \midrule
			\emph{FacebookMsg}   &   0.02 &   0.01 &   0.01 &   0.01 &    0.28 &    0.59 &    1.23 & OOM &    81.28 &    0.22 & OOM &   0.04 &   0.05 &   0.05 &    0.07 &   0.04 &   0.04 &   0.05 & 0.17 \\
			\emph{Enron}         &   0.78 &   0.32 &   0.27 &   0.25 &   37.21 &   79.64 &  166.57 & OOM & 39398.65 &    6.93 & OOM &   1.76 &   2.01 &   2.71 &    4.88 &   2.16 &   2.70 &   5.90 & 29.06 \\
			\emph{AskUbuntu}     &   0.25 &   0.10 &   0.09 &   0.09 &    0.90 &    1.78 &    3.53 & OOM &      OOT &    1.43 & OOM &   0.21 &   0.23 &   0.27 &    0.40 &   0.21 &   0.22 &   0.25 & 0.60 \\
			\emph{Twitter}       &   2.23 &   1.19 &   1.06 &   1.05 &  684.04 & 3521.95 & 7931.67 & OOM & 42039.00 &   12.45 & OOM &   2.85 &   3.02 &   4.10 &   99.74 &   2.83 &   3.06 &   5.27 & 310.26 \\        
			\emph{Wikipedia}     &  98.10 &  50.00 &  45.18 &  44.64 & 1758.70 & 3807.09 & 8367.10 & OOM &      OOT & 1605.45 & OOM & 222.14 & 280.05 & 466.25 &  899.00 & 306.85 & 429.42 & 914.14 & 3192.34 \\
			\emph{StackOverflow} & 122.23 &  73.10 &  55.18 &  52.23 & 2146.92 & 4518.03 & 9694.15 & OOM &      OOT &  263.71 & OOM &  83.29 &  92.65 & 111.19 &  152.80 &  83.86 &  90.31 & 103.18 & 161.23 \\ 
			\emph{Reddit}        & 255.62 & 145.96 & 126.69 & 121.05 & 1696.92 & 3593.85 &     OOT & OOM &      OOT &  498.25 & OOM & 159.74 & 178.69 & 226.34 &  515.22 & 158.80 & 167.79 & 209.99 & 769.91 \\
			\emph{Bitcoin}       & 291.67 & 119.20 & 108.12 & 107.36 &     OOT &     OOT &     OOT & OOM &      OOT & 2410.94 & OOM & 222.35 & 271.93 & 550.86 & 3867.92 & 205.20 & 282.98 &1342.53 & OOT  \\
			\bottomrule
		\end{tabular}
	}
\end{table*}

\begin{figure*}\centering
	\begin{subfigure}{.23\linewidth}\centering
		\includegraphics[width=1\linewidth]{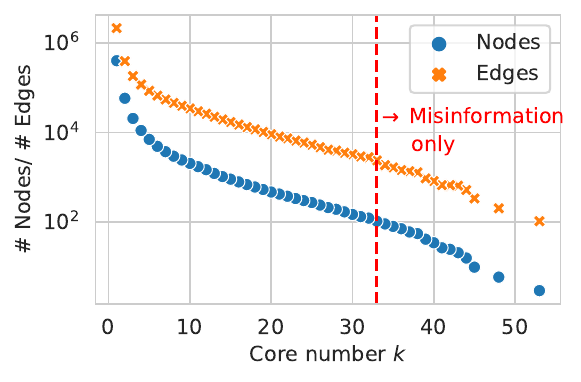}
		\caption{Number of users (nodes) and retweets (edges) in the $(k,\Delta_{50\%})$-cores. \hfill\mbox{}}
		\label{fig:twitter_sizes_h}
	\end{subfigure}\hfill%
	\begin{subfigure}{.23\linewidth}\centering
		\includegraphics[width=1\linewidth]{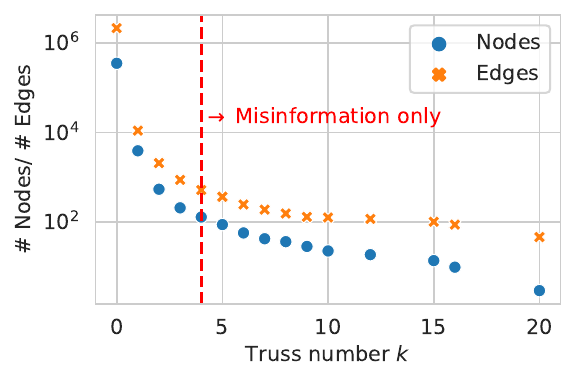}
		\caption{Number of users (nodes) and retweets (edges) in the $(k,\Delta_{50\%})$-trusses.}
		\label{fig:twitter_sizes_h_truss}
	\end{subfigure}\hfill%
	\begin{subfigure}{.23\linewidth}\centering
		\includegraphics[width=1\linewidth]{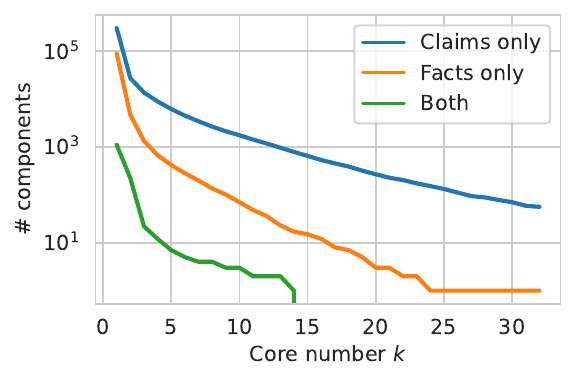}
		\caption{Numbers of $\Delta$-connected components of the $(k,\Delta_{50\%})$-cores with only claims, facts, or both.}
		\label{fig:twitter_ratios_h}
	\end{subfigure}\hfill%
	\begin{subfigure}{.23\linewidth}\centering
		\includegraphics[width=1\linewidth]{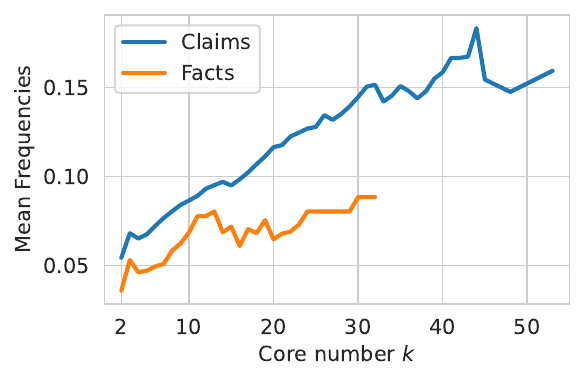}
		\caption{Mean frequencies $(s^{-1})$ of retweets in $\Delta_{50\%}$-connected components.}%
	\label{fig:twitter_freqencies_h}
\end{subfigure}   
\caption{Statistics of the $(k,\Delta_{50\%})$-cores, $(k,\Delta_{50\%})$-trusses, and $\Delta_{50\%}$-connected components of the $(k,\Delta_{50\%})$-cores.}
\label{fig:twitter}
\end{figure*}

\subsection{Choosing the Parameter $\Delta$}\label{sec:delta} 
We choose the values of $\Delta$  based on the node-based inter-event times (IETs). The IETs are defined as the set $\mathcal{I}=\{t_2-t_1\mid e_1=(\{u,v\},t_1),e_2=(\{u,w\},t_2)\in \tge, t_1\leq t_2 \text{ and } e_1,e_2 \text{ are consecutive}\}$ where two temporal edges are consecutive iff.~there is no other edge $e'=(\{u,x\},t')$ with $t_1<t'< t_2$.
Let $i_1<i_2\in\mathcal{I}$ such that there exists no other $j\in\mathcal{I}$ with $i_1< j < i_2$. Then, for 
$\Delta_1,\Delta_2\in\mathbb{N}$ with $i_1<\Delta_1\leq \Delta_2 < i_2$ the $(k,\Delta_1)$-core and $(k,\Delta_2)$-core (or $(k,\Delta_1)$-truss and $(k,\Delta_2)$-truss, resp.) are isomorphic for all $k$ because the $\Delta$-degrees (or $\Delta$-supports) do not change going from $\Delta_1$ to $\Delta_2$.
Based on this observation, we cover the range of the most relevant $\Delta$ values by choosing $\Delta$ to be the $10$, $25$, $50$, and $75$-percentiles of the IETs, denoted with $\Delta_{10\%}$, $\Delta_{25\%}$, $\Delta_{50\%}$, and $\Delta_{75\%}$, respectively.
\Cref{table:props} shows the durations as well as the average and maximum $\Delta$-degree and $\Delta$-support, respectively.
As expected, with increasing $\Delta$, the maximum and average $\Delta$-degrees and $\Delta$-supports increase. 

\subsection{Efficiency}\label{sec:experiments:efficiency}
\Cref{table:kdcorerunningtimes2} shows the running times in seconds.
For low values of $\Delta$, the running times of our algorithms are comparable, or even less, than the static-$k$-core.
The running times for computing the $(k,\Delta)$-core and $(k,\Delta)$-truss decompositions increase with increasing~$\Delta$ because the numbers of $\Delta$-incident edges (triangles, resp.) increase (see \Cref{table:props}) leading to higher numbers of updates of the $\Delta$-degree, or $\Delta$-support, in the peeling-step of the algorithms.
This effect is particularly pronounced for the \emph{Twitter} data set and the increase from $\Delta_{50\%}$ to $\Delta_{75\%}$ where the maximum $\Delta$-degree and the maximum $\Delta$-support increase in two orders of magnitude leading also to a significant increase in the running time. 
Similarly, for \emph{Bitcoin}, the massive increase of the average and maximum $\Delta$-support from $\Delta_{50\%}$ to $\Delta_{75\%}$  causes \texttt{$(k,\Delta)$-Truss} to exceed the time limit.

For the $(k,h)$-core baseline (\texttt{$(k,h)$-C}), the running times decrease for increasing $h$ because only temporal edges with at least $h$ parallel temporal edges are considered in the computation. 
The $(\eta,k)$-pseudocore (\texttt{PC-$\eta$}) has in most cases (significant) higher running times compared to our algorithm and cannot finish the computations in the given time limit of twelve hours for the \emph{Bitcoin} data set and \emph{Reddit} for $\eta=32$.
The span-truss implementation (\texttt{SpanT}) is for all data sets significantly slower than \texttt{$(k,\Delta)$-Truss} and can only finish the computation in the time limit for the three data sets with the shortest total span \emph{FacebookMsg}, \emph{Enron}, and \emph{Twitter}. The reason is that the running time is in $\mathcal{O}(|\mathcal{T}|^2\cdot m^{1.5})$, i.e., has a quadratic term in the interval spanned by the network.
The \texttt{$(L,K)$-C} algorithm performs very well in terms of running time.
However, \texttt{$(L,K)$-C} \emph{fails to identify any (non-trivial) cores (or trusses, resp.)} because it is too strict by requiring a core to exist in each time step of the interval, which can be impossible in temporal networks that are sparse in each time step. For example, the average number of edges in a time step of \emph{FacebookMsg} is only slightly above one.
Due to their high memory demands \texttt{SpanC} and \texttt{$(l,\delta)$-C} fail the computations due to out-of-memory error for all data sets. The reason is that the implementations rely on representing the temporal networks with data structures that need $\mathcal{O}(|V|^2\cdot x)$ space, where $x$ is the number of timestamps $x=|T(\tg)|$ in the case of \texttt{SpanC} or duration of the interval spanned by the network, i.e., $x=|\mathcal{T}|$, for \texttt{$(l,\delta)$-C}.
In contrast, computing the $(k,\Delta)$-cores and $(k,\Delta)$-trusses is memory-efficient---\Cref{table:memory} shows the maximum memory usage of our algorithms for the large data sets. The memory usage of {{\texttt{$(k,\Delta)$-Core}}} and {{\texttt{$(k,\Delta)$-Truss}}} were equal for all data sets and in order of the numbers of temporal edges. The algorithms needed the most memory for the \emph{Bitcoin} data set with 42.7 GB.
For the small data sets the memory usage is far below one gigabyte, reaching a maximum for \emph{Twitter} of 769 MB.
\begin{table}[htb]
\centering
\caption{Memory usage of {{\texttt{$(k,\Delta)$-Core}}} and {{\texttt{$(k,\Delta)$-Truss}}}.}  
\label{table:memory}
\resizebox{0.7\linewidth}{!}{ \renewcommand{\arraystretch}{1} \setlength{\tabcolsep}{14pt}
	\begin{tabular}{lc}\toprule
		\textbf{Data~Set}    & \textbf{Max. memory usage in GB}  \\
		\midrule       
		\emph{Wikipedia}     & 13.7 \\
		\emph{StackOverflow} & 16.7 \\ 
		\emph{Reddit}        & 29.3 \\
		\emph{Bitcoin}       & 42.7 \\
		\bottomrule
	\end{tabular}
}
\end{table}

\subsection{Use Case: Analyzing Malicious Retweets}\label{sec:usecase}
We use the $(k,\Delta)$-core and $(k,\Delta)$-truss decompositions to analyze malicious tweets in the X network (formerly Twitter).
Recent works showed that spreaders of misinformation and fake news often can be found in the inner cores of the static $k$-core decomposition~\cite{shin2016corescope,shao2018anatomy}. \citet{shao2018anatomy} analyzed %
a subgraph of the Twitter graph (data set \emph{Twitter} in \Cref{table:datasets_stats2}) representing users and retweets in the critical period of six months before the 2016 US Presidential Elections. 
The network uses a fine-grained time scale of seconds, and all edges are either labeled as \emph{fact-checking} or \emph{misinformation} (or claims), where \emph{misinformation} constitutes 81.9\% of the edges.
\citet{shao2018anatomy}
showed that the users with a high (static) core value exhibit a strong tendency to disseminate misinformation and fake news. 
Here, we extend and improve the results of~\citet{shao2018anatomy} by analyzing the network using our $(k,\Delta)$-decompositions:
We computed the $(k,\Delta_{50\%})$-core and -truss decompositions and for each core or truss the $\Delta_{50\%}$-connected components.
\Cref{fig:twitter_sizes_h} shows for each $(k,\Delta_{50\%})$-core the numbers of nodes and edges; 
similarly, \Cref{fig:twitter_sizes_h_truss} shows for each $(k,\Delta_{50\%})$-truss the numbers of nodes and edges. 
The dashed vertical lines mark the value of $k$ from which on the $(k,\Delta_{50\%})$-cores and $(k,\Delta_{50\%})$-trusses contain only claims and misinformation but no fact-checking.
We observe that edges in the inner cores only represent misinformation, which not only aligns with \citet{shao2018anatomy} observation that users in inner cores tend to spread misinformation but strengthens it with the observation that the communication in inner $(k,\Delta_{50\%})$-cores, for $k\geq 33$, only consists of misinformation and does not contain any fact-checking. We see this behavior even stronger for the $(k,\Delta)$-trusses: For $k\geq 4$, the $(k,\Delta)$-trusses only consist of misinformation retweets.
Furthermore, the $\Delta_{50\%}$-connected components of the cores and trusses are highly edge-homophilic, i.e., for most cores and trusses all edges in a $\Delta$-connected component are either misinformation or fact-checking. 
\Cref{fig:twitter_ratios_h} shows the numbers of the $\Delta_{50\%}$-connected components containing only claims, fact-checking, or both. The majority of $\Delta_{50\%}$-connected components only contain claims and significantly less components only contain fact-checking or both. Moreover, for $k\geq 14$, there are no $\Delta_{50\%}$-connected components containing both.

\Cref{fig:twitter_sizes_h} shows that the numbers of users decreases for increasing $k$. Furthermore, many of the $\Delta$-connected components in the inner $(k,\Delta)$-cores and -trusses, i.e., with large $k$, consist of only a few users. In these user groups that spread misinformation, highly frequent retweeting appears.
\Cref{fig:twitter_freqencies_h} shows the mean frequencies (in $s^{-1}$) of retweets in the $\Delta$-connected components that either only contain fact-checking or misinformation spreading. For the latter, the mean frequencies are significantly higher and show a stronger increase for increasing $k\geq 12$. 

In comparison, using the baselines, we cannot to obtain these insights.
The static $k$-cores in the graph are single connected components for $k\geq 3$ preventing insights into smaller components of the network. 
\texttt{SpanC} and \texttt{$(l, \delta)$-C} run out of memory. For \texttt{SpanT} and \texttt{$(L,K)$-C}, the results are empty because of the high time-granularity of the network, resulting in too sparse time steps for these approaches. 
Using the \texttt{$(k,h)$-C} baseline, we can obtain inner cores that have a higher ratio of misinformation. But we cannot achieve a strict distinction of finding only misinformation-spreading cores, as with the $(k,\Delta)$-core decomposition, because all $(k,h)$-cores contain misinformation and fact-checking for $h\in\{2,4,8\}$. 
In the case of \texttt{PC-$\eta$} baseline, the highest core values belong to nodes that are mainly involved in spreading misinformation. However, the subgraphs corresponding to the $(\eta,k)$-cores span the complete time interval of the network, so we cannot obtain any information about the temporal properties as we do with the new $(k,\Delta)$-decompositions. 

In conclusion, our analysis underscores the intrinsic value of $(k,\Delta)$-decompositions, especially when combined with $\Delta$-connected components, in identifying temporal structures and dynamic patterns within fine-grained temporal networks. 
Finally, see \Cref{appendix:corevstruss} for an additional comparison of the cohesiveness of the $(k,\Delta)$-core and $(k,\Delta)$-truss decompositions.

\section{Conclusion and Future Work}\label{sec:conclusion} 

We addressed the hierarchical decomposition of temporal networks by introducing a novel edge-based framework that incorporates the temporal dimension, leading to the development of the $(k,\Delta)$-core and $(k,\Delta)$-truss decompositions. Our highly efficient algorithms successfully handled large-scale, dynamic temporal networks where existing methods failed, and we demonstrated their effectiveness in a real-world use case of analyzing malicious Twitter content.

Currently, our algorithms require a predefined temporal distance~$\Delta$. Although it's straightforward to compute decompositions for all relevant $\Delta$ values, future work will focus on developing more efficient algorithms for simultaneous decomposition in both $k$ and $\Delta$. Additionally, we plan to explore size-restricted $(k,\Delta)$-cores or trusses and consider advanced edge-weighting functions.

\section*{Acknowledgments}
Giuseppe F. Italiano was partially supported by the Italian Ministry of University and Research under PRIN Project n. 2022TS4Y3N - EXPAND: scalable algorithms for EXPloratory Analyses of heterogeneous and dynamic Networked Data.

\clearpage %

\onecolumn
\appendix
\section*{Appendix}

\section{Omitted Proofs}\label{appendix:proofs}

\begin{proof}[Proof of \Cref{theorem:general}]
	After initialization $c[e]=\varphi(e,\tge)$ for all $e\in\tge$.
	Let $e_1,\ldots,e_m$ be the sequence of temporal edges removed from the network.
	First, $e_1$ is one of the edges with $c[e_1]$ being the lowest value $r_1$ and $e_1$ is in the $\tcore^{\varphi}_{r_1}$ component of $\tg$ but not in $\tcore^{\varphi}_{r'}$ with $r'>r_1$.
	Hence, $c[e_1]=c_\varphi(e_1)$ and after removing edge $e_1$ for all remaining edges $e_i$ with $2\leq i \leq m$ it holds $c[e_i]\geq c_\varphi(e_1)$.
	Now, consider the temporal edge $e_j$, processed in the $j$-th iteration of the while-loop, i.e., $c[e_j]$ is minimal for all remaining edges, i.e., all for all remaining edges in $e'\in\tge'$ it is $c[e']\geq r_j$. Hence,
	$e_j$ is in the $\tcore^{\varphi}_{r_j}$ component of $\tg$ but not in $\tcore^{\varphi}_{r'}$ with $r'>r_j$, and consequently, $c[e_j]=c_\varphi(e_j)$. Because, $c[e]$ is not updated after temporal edge $e\in\tge$ is processed, the algorithm returns the correct $c[e]=c_\varphi(e)$ for all $e\in\tge$.
	
	Finally, due to the monotonicity of $\varphi$, the order in which any two temporal edges $h,h'\in\tge$ with $c_\varphi(h)=c_\varphi(h')$ are removed does not affect the solution.
	Assume it would matter, i.e., there are two sequences of removals $\sigma_1=(e_1,\ldots,e_k,h,\ldots,h',\ldots)$ and $\sigma_2=(e_1,\ldots,e_k,h',\ldots,h,\ldots)$ with $c_\varphi(h)=c_\varphi(h')$ such that $c_1=c[h]$ wrt.~to $\sigma_1$ differs from $c_2=c[h]$ wrt.~$\sigma_2$.
	After removing the edges up to $e_k$, $c[h]=c[h']$, otherwise the order of removing them is fixed by the algorithm.
	Assume we remove $h$ first, then all remaining edges $e\in\tge'$ still have $c[e]\geq c[h]$. However, due to monotonicity of $\varphi$ it also holds that $\varphi(e,\tge'\setminus\{f\})\leq \varphi(e,\tge')$. Hence, $c[h']$ will stay equal to $c[h]$.
	Removing $h'$ first, is analogously. Hence the algorithm leads to a unique decomposition. 

\end{proof}

\begin{proof}[Proof of \Cref{lemma:monotone}]
	Let $\tge_1\subseteq \tge_2 \subseteq \tge$ and $e\in\tge$.
	If $e\not\in\tge_2$, then $d_\Delta(e,\tge_1) = d_\Delta(e,\tge_2) = 0$.
	Otherwise, if $e\in \tge_2$ but $e\not\in\tge_1$, $d_\Delta(e,\tge_1)=0 \leq d_\Delta(e,\tge_2)$.
	And finally, if $e\in \tge_1$ it holds that
	\begin{align*}
		d_\Delta(e,\tge_1)=\min (&|\left\{(\{u,w\},t')\in \tge_1 \mid |t-t'|\leq \Delta \right\}|,\\&|\left\{(\{v,w\},t')\in \tge_1 \mid |t-t'|\leq \Delta \right\}|)
		\\\leq \min (&|\left\{(\{u,w\},t')\in \tge_2 \mid |t-t'|\leq \Delta \right\}|,\\&|\left\{(\{v,w\},t')\in \tge_2 \mid |t-t'|\leq \Delta \right\}|) =d_\Delta(e,\tge_2).
	\end{align*}
\end{proof}

\begin{proof}[Proof of \Cref{lemma:eqcorenum}]
	For $\Delta_m=\max\{\Delta\in D\}$, it holds for each pair of incident edge $(\{u,v\},t_1)$ and %
	$(\{u,w\},t_2)$ the waiting time $|t_1-t_2|$ 
	at $u$ is not larger than $\Delta_m$. Hence, the temporal restriction of \Cref{def:deltadeg} is trivially respected for all edges leading to the result.
\end{proof}

\begin{proof}[Proof of \Cref{theorem:containment}]
	Assume that for $k',\Delta'\in\mathbb{N}$, it holds that $k\leq k'$, $\Delta \geq \Delta'$  with at least one of the inequalities being strict.
	We consider the following cases:
	\begin{itemize}[leftmargin=5.5mm]
		\item $k<k'$ and $\Delta=\Delta'$: For each temporal edge $e=(\{u,v\},t)$ it holds that if $e$ is part of $\tg_{(k',\Delta)}$, both of the endpoints $u$ and $v$ are connected to at least $k'+1$ neighbors in $\tg_{(k',\Delta)}$. This also means that $u$ and $v$ are connected to at least $k+1<k'+1$ neighbors, and hence, the edge is also in $\tg_{(k,\Delta)}$.
		\item $k=k'$ and $\Delta>\Delta'$: For each temporal edge $e=(\{u,v\},t)$ it holds that if $e$ is part of $\tg_{(k,\Delta')}$, both of the endpoints $u$ and $v$ are connected to at least $k+1$ neighbors in $\tg_{(k,\Delta')}$ such that for each such incident edge $(\{w,u\},t_1)$ or $(\{v,x\},t_2)$ it has to hold that the waiting time $|t_1-t|$ at $u$, or $|t_2-t|$ at $v$, respectively,  is not larger than $\Delta'$. This also means that the waiting time is not larger than $\Delta>\Delta'$, and hence, the edge is also in $\tg_{(k,\Delta)}$.
		\item $k<k'$ and $\Delta>\Delta'$: Follows from the first two cases combined.
	\end{itemize}
\end{proof}

\begin{proof}[Proof of \Cref{theorem:kdcore}]
	Let $\{e_1,\ldots,e_m\}$ be the sequence in which the edges are processed by the for loop in line \ref{alg:coredecomp:for2}. 
	After initialization, $d[e]$ equals the number minimum of $\Delta$-incident edges at its endpoints.
	Because for $e_1$ the value of $d[e_1]$ is the smallest, it follows that $e_1$ has exactly $d[e_1]$ $\Delta$-incident edges at one of its endpoints, and each edge $e$ has at least $d[e]\geq d[e_1]$ $\Delta$-incident edges at both endpoints. Therefore, $e_1$ is part of a maximal subgraph in which each edge has at both endpoints at least $d[e_1]$ $\Delta$-incident edges, i.e., $c_\Delta(e_1) = d[e_1]$.
	In each round, the processed edge is removed, and the affected degrees $d[f]$ of $\Delta$-incident edges are updated.
	In the $i$-th round, \Cref{alg:coredecomp} processes $e_i$ with $d[e_i]$, i.e., the edge with the smallest $\Delta$-degree currently remaining in the graph.
	Now we argue again that because for $e_i$ the value of $d[e_i]$ is the smallest, $e_i$ has exactly $d[e_i]$ $\Delta$-incident edges at one of its endpoints. Furthermore, each edge $e$ remaining in $\tg$ has at least $d[e]\geq d[e_i]$ $\Delta$-incident edges at both endpoints. Therefore, $e_i$ is part of a maximal subgraph in which each edge has at both endpoints at least $d[e_i]$ $\Delta$-incident edges, i.e., $c_\Delta(e_i) = d[e_i]$. Hence, after the loop in line \ref{alg:coredecomp:for2} ends, the algorithm returns with $d$ the core numbers for all $e\in\tge$.
	
	For the running time, the initialization of $d$ is possible in $\mathcal{O}(m\cdot \delta_m)$ if the edges are stored at each vertex in chronologically ordered incident lists.
	Next, we order the edges by their $\Delta$-degree using bin-sort and update the sorting after each iteration of the for-loop in line \ref{alg:coredecomp:for2} in constant time.
	In each iteration, we may have to update the value $d[f]$ for each of the at most $\delta_m$ to $e$ $\Delta$-incident edges.
	
	For the space complexity, note that we only need the arrays $a$ and $d$, which are all in $\mathcal{O}(m)$. Furthermore, the space complexity of bin sort is in $\mathcal{O}(m)$. 
\end{proof}

\begin{proof}[Proof of \Cref{lemma:trussmonotone}]
	Let $\tge_1\subseteq \tge_2 \subseteq \tge$ and $e\in\tge$.
	If $e\not\in\tge_2$, then $s_\Delta(e,\tge_1) = s_\Delta(e,\tge_2) = 0$.
	Otherwise, if $e\in \tge_2$ but $e\not\in\tge_1$, $s_\Delta(e,\tge_1)=0 \leq s_\Delta(e,\tge_2)$.
	And finally, if $e\in \tge_1$ it holds that
	\begin{align*}
		s_\Delta(e,\tge_1)=&|\{ \{e_i,e_j\} \mid  e_i=(\{u,w\},t_1),e_j=(\{v,w\},t_2)\in\tge_1 \text{ with } u\neq v\neq w, |t-t_1|\leq\Delta, |t-t_2|\leq\Delta \\&\quad\text{ and }|t_1-t_2|\leq\Delta\}|\\
		\leq &|\{ \{e_i,e_j\} \mid  e_i=(\{u,w\},t_1),e_j=(\{v,w\},t_2)\in\tge_2 \text{ with } u\neq v\neq w, |t-t_1|\leq\Delta, |t-t_2|\leq\Delta \\&\quad\text{ and }|t_1-t_2|\leq\Delta\}|\\=&s_\Delta(e,\tge_2).
	\end{align*}
\end{proof}

\begin{proof}[Proof of \Cref{theorem:kdtruss}]
	\Cref{alg:trussdecomp} shows the algorithm.
	Let $\{e_1,\ldots,e_m\}$ be the sequence in which the edges are processed by the for loop in line \ref{alg:trussdecomp:for}. 
	After initialization, $\tau[e]$ equals the number minimum of $\Delta$-support of the edges.
	Because for $e_1$ the value of $\tau[e_1]$ is the smallest, it follows that $e_1$ has exactly $\tau[e_1]$ $\Delta$-support, and each edge $e$ has at least $\tau[e]\geq \tau[e_1]$ $\Delta$-support. Therefore, $e_1$ is part of a maximal subgraph in which each edge has a $\Delta$-support of at least $\tau[e_1]$, i.e., $s_\Delta(e_1) = \tau[e_1]$.
	In each round, the processed edge is removed, and the affected $\tau[f]$ of are updated.
	In the $i$-th round, \Cref{alg:trussdecomp} processes $e_i$ with $\tau[e_i]$, i.e., the edge with the smallest $\Delta$-support currently remaining in the graph.
	Now we argue again that because for $e_i$ the value of $\tau[e_i]$ is the smallest, $e_i$ has exactly $\tau[e_i]$ $\Delta$-support. Furthermore, each edge $e$ remaining in $\tg$ has at least $\tau[e]\geq \tau[e_i]$ $\Delta$-support. Therefore, $e_i$ is part of a maximal subgraph in which each edge has at both endpoints at least $\tau[e_i]$ $\Delta$-support, i.e., $c_\Delta(e_i) = \tau[e_i]$. Hence, after the loop in line \ref{alg:trussdecomp:for} ends, the algorithm returns with $\tau$ the truss numbers for all $e\in\tge$.
	
	For the running time, the initialization of $\tau$ is possible in $\mathcal{O}(m\cdot \xi^2)$ if the edges are stored at each vertex in chronologically ordered incident lists.
	Next, we order the edges by their $\Delta$-support using bin-sort and update the sorting after each iteration of the for-loop in line \ref{alg:trussdecomp:for} in constant time.
	In each iteration, we may have to update the value $\tau[f]$ for each of each pair in $I_\Delta$ in at most $|I_\Delta|$ time.
	Computing $I_\Delta$ is possible in $\mathcal{O}(\max(\log\delta_m,\xi^2)$ by first using binary search to find edges $\Delta$-incident to the endpoints of $e$ in $\mathcal{O}(m\cdot \log\delta_m)$ and then enumerating the relevant triangles in $\mathcal{O}(\xi^2)$ time.
	
	For the space complexity, note that we need the arrays $\tau$ and $I_\Delta$, which are in $\mathcal{O}(s_\Delta^{\max})$ and $\mathcal{O}(m)$, respectively. Furthermore, the space complexity of bin sort is in $\mathcal{O}(m)$. 
\end{proof}

\begin{proof}[Proof of \Cref{theorem:dccs}]
	Let $\mathcal{C}_k^\star$ with $\star\in\{d_\Delta,s_\Delta\}$ be a $(k,\Delta)$-core or $(k,\Delta)$-truss. 
	And, let $\mathcal{D}$ be the set of $\Delta$-connected components of $\mathcal{C}_k^\star$.
	First, assume $\mathcal{C}_k^\star$ is a $(k,\Delta)$-core, i.e., $\star=d_\Delta$.
	Let $D\in\mathcal{D}$ and $e$ be an edge in $D$. The $\Delta$-degree of $d_\Delta(e,D)$ equals the $\Delta$-degree $d_\Delta(e,\mathcal{C}_k^{d_\Delta})$ as due to the definition of $\Delta$-connectedness all edges $\tge'$ contributing to $d_\Delta(e,\mathcal{C}_k^{d_\Delta})$ are $\Delta$-reachable from $e$. Therefore, the edges $\tge'$ are also in $D$.  
	Because $D$ is subgraph of $\mathcal{C}_k^{d_\Delta}$ in which all edges have the same $\Delta$-degree as in $\mathcal{C}_k^{d_\Delta}$, $D$ is a $(k,\Delta)$-core.
	
	For the case that $\mathcal{C}_k^\star$ is a $(k,\Delta)$-truss, i.e., $\star=s_\Delta$, we argue analogously.
	Here, the $\Delta$-support of $s_\Delta(e,D)$ equals the $\Delta$-support $s_\Delta(e,\mathcal{C}_k^{d_\Delta})$ as due to the definition of $\Delta$-connectedness all edges $\tge'$ contributing to $s_\Delta(e,\mathcal{C}_k^{d_\Delta})$ are $\Delta$-reachable from $e$, leading to the result.

\end{proof}

\section{$(k,\Delta)$-Truss Decomposition Algorithm}\label{appendix:trussalg}
\Cref{alg:trussdecomp} shows our $(k,\Delta)$-truss decomposition.
After initializing $\tau[e]$ with $s_\Delta(e)$ for all edges $e\in\tge$ in line \ref{alg:trussdecomp:init} the algorithm sorts the edges in increasing order of $\tau$ using bucket sort.
Next, it iterates over the temporal edges in order of minimal current $\Delta$-support (line \ref{alg:trussdecomp:for}), removes the current edge (line \ref{alg:trussdecomp:forend}), and updates all affected edges and their positions in the processing order (lines 4-8).

\begin{algorithm2e}[htb]
	\label[algorithm]{alg:trussdecomp}
	\caption{$(k,\Delta)$-truss decomposition}
	\Input{Temporal graph $\tg=(V,\tge)$ and $\Delta\in\mathbb{N}$}
	\Output{Truss number $\tau_{\Delta}(e)$ for all $e\in \tge$}
	Initialize $\tau[e_i]=s_\Delta(e_i,\tge)$ for all $e_i\in \tge$\label{alg:trussdecomp:init} \;
	Bin sort edges $\tge$ in increasing order of $\tau[e]$\;
	\For{$e=(\{u,v\},t)\in \tge$ in sorted order}{\label{alg:trussdecomp:for}
		$I_\Delta \gets \{ \{e_i,e_j\} \mid  e_i=(\{u,w\},t_1),e_j=(\{v,w\},t_2)\in\tge \text{ with } u\neq v\neq w, |t-t_1|\leq\Delta, |t-t_2|\leq\Delta \text{ and }|t_1-t_2|\leq\Delta\}$ \;
		\For{$\{e_i,e_j\}\in I_\Delta$}{
			\lIf{$\tau[e_i]>\tau[e]$}{$\tau[e_i]\gets \tau[e_i]-1$}
			\lIf{$\tau[e_j]>\tau[e]$}{$\tau[e_j]\gets \tau[e_j]-1$}
			Update bin positions of $e_i$ and $e_j$\;
		}
		remove $e$ from $\tg$\;\label{alg:trussdecomp:forend}
	}
	\Return $\tau$
\end{algorithm2e}

\section{Computing $\Delta$-Connected Components}\label{appendix:ccs}
We propose a new linear time algorithm to decompose a temporal graph into its $\Delta$-connected components based on the following transformation of the temporal graph into a static graph. %

\begin{appendixdefinition}
	Let $\tg=(V,\tge)$ be a temporal graph and $\Delta\in\mathbb{N}$.
	We define the static $\Delta$-representation $S_\Delta(\tg)=(V_S, E_W\cup E_T)$: %
	\begin{compactenum}
		\item For $(\{u,v\},t)\in \tge$, we have the time-nodes $(u,t),(v,t)\in V_S$,
		\item for each pair of consecutive time-nodes $(u,t_1),(u,t_2)\in V_S$ with $|t_1-t_2|\leq \Delta$, we have $\{(u,t_1),(u,t_2)\}\in E_W$, and 
		\item for each $(\{u,v\},t)\in \tge$, we have $\{(u,t),(v,t)\}\in E_T$.
	\end{compactenum}
	We call two time-nodes $(u,t_1),(u,t_2)\in V_S$ {consecutive} iff.~$t_1<t_2$ and there is no other time-node $(u,t')\in V_S$ with $t_1<t'<t_2$. 
\end{appendixdefinition}

\Cref{app:tgexamplec} shows the transformation of the $(2,2)$-core shown in \Cref{app:tgexampleb} with the edges $e\in E_W$ dotted.
\Cref{app:tgexampled} shows the resulting $\Delta$-connected components which are also (non-inclusion-maximal) $(2,2)$-cores.
A key property of the transformed graph $S_\Delta(\tg)$ is that its static connected components are in a one-to-one correspondence to the $\Delta$-connected components of $\tg$.
\Cref{alg:onetoone} first computes the static $\Delta$-representation and then determines the conventional static connected components. The $\Delta$-connected components can then be obtained from the partition of $E_T$. 

\begin{appendixtheorem}\label{theorem:algonetoonecorrect}
	\Cref{alg:onetoone} correctly computes the $\Delta$-connected decomposition of $\tg$ in $\mathcal{O}(|\tge|)$ running time and space. %
\end{appendixtheorem}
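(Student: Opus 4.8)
The plan is to reduce \Cref{theorem:algonetoonecorrect} to two independent parts: \emph{correctness} of the reduction (the static connected components of $S_\Delta(\tg)$, read off through $E_T$, are exactly the $\Delta$-connected components of $\tg$) and a \emph{linear bound} on the size of $S_\Delta(\tg)$ together with the cost of computing its connected components. Since $\Delta$-reachability is an equivalence relation, it suffices to establish the edge-level equivalence: for temporal edges $e=(\{u,v\},t)$ and $f=(\{x,y\},t')$, their associated $E_T$-edges $\{(u,t),(v,t)\}$ and $\{(x,t'),(y,t')\}$ lie in the same static connected component of $S_\Delta(\tg)$ if and only if $f$ is $\Delta$-reachable from $e$. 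Because every time-node $(u,t)\in V_S$ exists only by virtue of some temporal edge incident to $u$ at time $t$, each static component contains at least one $E_T$-edge, so this edge-level equivalence immediately yields the claimed one-to-one correspondence on components and shows that \Cref{alg:onetoone} assigns edges to $\Delta$-connected components correctly.

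First I would prove the forward direction. Given a $\Delta$-walk $\omega=(e_1,\dots,e_\ell)$ with $e_1=e$ and $e_\ell=f$, consecutive edges $e_i,e_{i+1}$ share a vertex $w_i$ and satisfy $|t_i-t_{i+1}|\le\Delta$. The key observation is that the time-nodes $(w_i,t_i)$ and $(w_i,t_{i+1})$ are joined by a path of waiting edges: every intermediate time-node $(w_i,s)$ with $t_i<s<t_{i+1}$ has all its consecutive gaps bounded by $t_{i+1}-t_i\le\Delta$, so each consecutive pair forms an edge of $E_W$. Concatenating these waiting paths with the $E_T$-edges of the $e_i$ connects the $E_T$-edge of $e$ to that of $f$ in $S_\Delta(\tg)$.

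The harder direction is the \textbf{main obstacle}, and it is exactly where the choice of linking only \emph{consecutive} time-nodes must be justified. A static path from the $E_T$-edge of $e$ to that of $f$ decomposes into maximal runs of waiting edges separated by $E_T$-edges. Along a waiting run at a vertex $w$ the path may traverse a total time gap exceeding $\Delta$, so the two temporal edges bracketing the run need not be $\Delta$-incident directly. I resolve this by inserting, at every intermediate time-node $(w,s)$ visited by the run, a witnessing temporal edge incident to $w$ at time $s$ (which exists by construction of $V_S$); since successive time-nodes along the run differ by at most $\Delta$, the inserted edges together with the bracketing $E_T$-edges form a valid $\Delta$-walk. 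This step encodes the transitivity of $\Delta$-reachability and shows that consecutive waiting edges suffice to capture all $\Delta$-walks while keeping $|E_W|$ linear.

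Finally I would bound the resources. Each temporal edge contributes at most two time-nodes and exactly one $E_T$-edge, so $|V_S|\le 2m$ and $|E_T|=m$; grouping time-nodes by vertex and linking only consecutive ones gives $|E_W|\le\sum_{u\in V}(\deg(u)-1)\le 2m$, hence $S_\Delta(\tg)$ has $\mathcal{O}(m)$ nodes and edges. Assuming the temporal edges are supplied in (or bucket-sorted into, using that timestamps are integers) chronological incidence lists, $S_\Delta(\tg)$ is assembled in a single linear scan, its connected components are found by BFS/DFS in $\mathcal{O}(|V_S|+|E_W|+|E_T|)=\mathcal{O}(m)$ time and space, and the $\Delta$-connected components are read off from the partition of $E_T$ induced by the component labels in one further linear pass. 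Combining the edge-level equivalence with these bounds yields the theorem.
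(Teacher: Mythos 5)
Your proposal is correct and follows essentially the same route as the paper: the paper likewise factors the argument into a one-to-one correspondence between the static connected components of $S_\Delta(\tg)$ and the $\Delta$-connected components of $\tg$ (its Lemma~A.1), followed by the same $\mathcal{O}(m)$ bounds on $|V_S|$, $|E_W|$, $|E_T|$ and a DFS over the chronologically ordered representation. If anything, your treatment of the reverse direction is slightly more explicit than the paper's, since you spell out how a waiting run whose total span exceeds $\Delta$ is converted into a $\Delta$-walk by inserting witnessing temporal edges at the intermediate time-nodes, a point the paper's lemma glosses over by treating $\Delta$-incident time-nodes as directly joined in $E_W$.
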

Before we prove \Cref{theorem:algonetoonecorrect}, we show the following property.

\begin{appendixlemma}\label{lemma:onetoone}
	Let $\tg=(V,\tge)$ be a temporal graph and $\Delta\in\mathbb{N}$.
	There is a one-to-one correspondence between the $\Delta$-connected components of $\tg$ and the (static) connected components of $S_\Delta(\tg)$.
\end{appendixlemma}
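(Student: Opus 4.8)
The plan is to exhibit the natural bijection between temporal edges and the edges in $E_T$ and to show that it carries the partition of $\tge$ into $\Delta$-connected components onto the partition of $E_T$ induced by the static connected components of $S_\Delta(\tg)$. Concretely, I map each temporal edge $e=(\{u,v\},t)\in\tge$ to the edge $\Phi(e)=\{(u,t),(v,t)\}\in E_T$; by the construction of $S_\Delta(\tg)$ this is a bijection between $\tge$ and $E_T$. The heart of the argument is the claim that, for temporal edges $e,f$, the edge $f$ is $\Delta$-reachable from $e$ if and only if $\Phi(e)$ and $\Phi(f)$ lie in the same static connected component of $S_\Delta(\tg)$. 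Granting this claim, the one-to-one correspondence follows: every time-node $(u,t)\in V_S$ arises from some temporal edge incident to $u$ at time $t$ and is therefore an endpoint of an $E_T$-edge, so every static component contains at least one $E_T$-edge; conversely each $E_T$-edge lies in a unique static component, and the claim guarantees that all $E_T$-edges of a fixed static component correspond to pairwise $\Delta$-reachable—hence co-component—temporal edges, and vice versa.

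For the forward direction of the claim I would start from a $\Delta$-walk $e=e_1,\dots,e_\ell=f$ and treat each consecutive pair $e_i,e_{i+1}$ that shares a vertex $w$ with $|t_i-t_{i+1}|\leq\Delta$. The time-nodes $(w,t_i)$ and $(w,t_{i+1})$ are then joined in $S_\Delta(\tg)$ by a path of $E_W$-edges, since every pair of consecutive time-nodes at $w$ with timestamps lying between $t_i$ and $t_{i+1}$ has gap at most $|t_i-t_{i+1}|\leq\Delta$ and hence contributes a waiting edge. Because $\Phi(e_i)$ is incident to $(w,t_i)$ and $\Phi(e_{i+1})$ to $(w,t_{i+1})$, the images $\Phi(e_i)$ and $\Phi(e_{i+1})$ lie in a common static component; transitivity along the walk then places $\Phi(e)$ and $\Phi(f)$ in the same component.

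The converse direction is where I expect the main work, and I would prove it by converting a static path into a $\Delta$-walk. Take a path $(n_0,\dots,n_p)$ in $S_\Delta(\tg)$ with $n_0$ an endpoint of $\Phi(e)$ and $n_p$ an endpoint of $\Phi(f)$, write $n_j=(u_j,t_j)$, and build the walk incrementally while maintaining the invariant that the last temporal edge recorded is incident to $u_j$ at time $t_j$. I initialize with $e$, which is valid since $n_0$ is an endpoint of $\Phi(e)$. For an $E_W$-edge $\{n_j,n_{j+1}\}$ we have $u_j=u_{j+1}$ and $|t_j-t_{j+1}|\leq\Delta$, so appending any temporal edge incident to the time-node $n_{j+1}$ (which exists because $n_{j+1}\in V_S$) preserves the invariant and is $\Delta$-incident to the previously recorded edge. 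For an $E_T$-edge I append the temporal edge $(\{u_j,u_{j+1}\},t_j)$ it represents, which shares $u_j$ with the previous edge at the common timestamp $t_j=t_{j+1}$ and leaves the walk incident to $u_{j+1}$. Finishing with $f$ yields a $\Delta$-walk from $e$ to $f$.

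The subtle point throughout—and the crux of the converse—is that a static path may traverse arbitrarily long chains of $E_W$-edges whose endpoints are far apart in time, so one cannot simply read off the $E_T$-edges on the path and declare them $\Delta$-incident. The resolution is that every intermediate time-node is itself generated by a temporal edge, and consecutive waiting edges have gap at most $\Delta$, so these intermediate temporal edges can always be inserted to bridge the walk. I would also take care to note that self-incidence and repeated edges are permitted by the definition of a $\Delta$-walk (an edge is $\Delta$-incident to itself, as each edge counts in its own $\Delta$-degree), so the stuttering produced by the construction is harmless.
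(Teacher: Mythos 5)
Your proof is correct, and it rests on the same two structural facts that drive the paper's argument: chains of waiting edges in $E_W$ realize $\Delta$-incidence at a shared vertex, and every time-node of $V_S$ is witnessed by some temporal edge, so static connectivity can be pulled back to $\Delta$-reachability. The organization is genuinely different, though. The paper works at the component level and by contradiction: it asserts without further argument that $S_\Delta(C)$ is connected for a $\Delta$-connected component $C$, and then establishes maximality on each side by exhibiting a crossing edge that would contradict componenthood. You instead prove a pointwise equivalence --- $f$ is $\Delta$-reachable from $e$ if and only if $\Phi(e)$ and $\Phi(f)$ lie in the same static component --- and read the bijection off from it. Your converse direction, which incrementally converts a static path into a $\Delta$-walk by inserting, at each $E_W$-step, an arbitrary temporal edge witnessing the intermediate time-node, is exactly the step the paper leaves implicit: its second half only rules out that the edge set corresponding to a static component fails to be maximal, and does not explicitly show that this set is internally $\Delta$-connected. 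Your version also dodges a subtlety in the paper's forward direction (whether ``consecutive'' time-nodes of $S_\Delta(C)$ remain adjacent inside $S_\Delta(\tg)$) by arguing directly in $S_\Delta(\tg)$, where intermediate time-nodes only shorten the gaps, and it explicitly notes that the stuttering and repeated edges produced by the lifting are permitted by the $\Delta$-walk definition. In short, both routes prove the same correspondence, but yours is the more self-contained and constructive rendering, at the cost of some bookkeeping with the walk invariant.
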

\begin{proof}[Proof of \Cref{lemma:onetoone}]
	Let $C$ be a $\Delta$-connected component of $\tg$. Consider $S_\Delta(C)$ and note that it is a single connected component $S_C$. Furthermore, $S_C$ is a subgraph of $S_\Delta(\tg)$. In particular, $S_C$ is a connected component of $S_\Delta(\tg)$. If $S_C$ would not be a connected component, then there would be an edge $\{(u,t_1),(u,t_2)\}\in E_W$ with $|t_1-t_2|\leq \Delta$ or an edge $\{(u,t),(v,t)\}\in E_T$ in both cases such that one of the endpoints is in $V(S_C)$ and the other endpoint is not in $V(S_C)$. First, assume $(u,t_1)\in V(S_C)$ and $(u,t_2)\not\in V(S_C)$ with $|t_1-t_2|\leq \Delta$. The existence of the two time-nodes $(u,t_1)\in V(S_\Delta(\tg))$ and $(u,t_2)\in V(S_\Delta(\tg))$ with $|t_1-t_2|\leq \Delta$ implies that there exists two $\Delta$-incident temporal edges $(\{u,v\},t_1), (\{u,w\},t_2)\in\tge$. However, this would mean that either both $(u,t_1)\in V(S_C)$ and $(u,t_2)\in V(S_C)$ or $(u,t_1)\not\in V(S_C)$ and $(u,t_2)\not\in V(S_C)$.
	Similarly, if we assume that there exists an edge $\{(u,t),(v,t)\}\in E_T$ it implies that there is an edge $(\{u,v\},t)\in E(C)$, and this would mean that both $(u,t)\in V(S_C)$ and $(v,t)\in V(S_C)$.

	For the other direction, let $S$ be a connected component of $S_\Delta(\tg)$. Consider the subgraph $\tg'$ of $\tg$ that corresponds to $S$. And assume that $\tg'$ is not a $\Delta$-connected component of $\tg$. This means that there is a temporal edge $(\{u,v\},t_1)\in\tge$ which is $\Delta$-incident to an edge $(\{u,w\},t_2)\in E(\tg')$, i.e., $|t_1-t_2|\leq \Delta$.
	$S_\Delta(\tg)$ contains all four time-nodes $(u,t_1),(u,t_2),(v,t_1),(w,t_2)\in V_S$.
	Because $|t_1-t_2|\leq \Delta$, there is $\{(u,t_1),(u,t_2)\}\in E_W$.
	Moreover, by definition $\{(u,t_1),(v,t_1)\}\in E_T$ and $\{(u,t_2),(w,t_2)\}\in E_T$.
	Hence, the four time-nodes are connected in $S$, which implies that $\tg'$ is also $\Delta$-connected.
\end{proof}

\begin{proof}[Proof of \Cref{theorem:algonetoonecorrect}]
	Computing $S_\Delta(\tg)$ takes $\mathcal{O}(m)$ time. We construct at most $2m$ time nodes 
	and the sizes of $E_W$ and $E_T$ are in $\mathcal{O}(m)$.
	To compute $E_W$, we pass over the chronologically ordered edge stream. For each node $u\in V$ and at any time, we only keep the time step when we encountered it last as we only connect time-nodes that are consecutive in time.
	Then, the next time we process an edge containing $u$, we can determine if the condition $|t_1-t_2|\leq \Delta$ holds, and if so, we add the corresponding edge to $E_W$.
	Hence, the construction of $V_S$, $E_W$, and $E_T$, i.e., $S_\Delta(\tg)$, are all possible in $\mathcal{O}(m)$.
	After the construction of $S_\Delta(\tg)$, we can run, e.g., a DFS in $\mathcal{O}(m)$ to determine the connected components of $S_\Delta(\tg)$ which are in a one-to-one correspondence with the $\Delta$-connected components of $\tg$ (see \Cref{lemma:onetoone}).
\end{proof}

\begin{algorithm2e}[htb!]
	\label[algorithm]{alg:onetoone}
	\caption{$\Delta$-cc decomposition}
	\Input{Temporal graph $\tg=(V,\tge)$ and $\Delta\in\mathbb{N}$}
	\Output{Set of $\Delta$-cc~$P_\Delta^i$ partitioning $\tge$}
	Compute $S_\Delta(\tg)$\;
	Compute the static connected components $C^i=(V^i_S, E^i_W\cup E^i_T)$ for $i\in[\ell]$ of $S_\Delta(\tg)$\;
	$P_\Delta^i\gets \{(\{u,v\},t)\mid \{(u,t),(v,t)\}\in E^i_T\}$ for $i\in[\ell]$\;
	\Return $\{P_\Delta^i\}_{i\in[\ell]}$
\end{algorithm2e}

\begin{figure*}\centering
	\begin{subfigure}{0.33\linewidth}\centering
		\begin{tikzpicture}
			\node[treevertex,color=black,fill=white] (a) at (0, 1.6) [circle, draw] {$a$};
			\node[treevertex,color=black,fill=white] (c) at (1.6, 1.6) [circle, draw] {$c$};
			\node[treevertex,color=black,fill=white] (b) at (0, 0) [circle, draw] {$b$};
			\node[treevertex,color=black,fill=white] (d) at (1.6, 0) [circle, draw] {$d$};

			\draw[edge] (a) -- (b) node[midway, left] {$1,20$};
			\draw[edge] (b) -- (c) node[pos=0.25,above,sloped] {$3,8$};
			\draw[edge] (a) -- (c) node[midway, above] {$1,22$};
			\draw[edge] (c) -- (d) node[midway, right] {$6,20$};
			\draw[edge] (a) -- (d) node[pos=0.25,above,sloped] {$4,10$};
			\draw[edge] (d) -- (b) node[midway, below] {$6,23$};
		\end{tikzpicture}
		\caption{Temporal graph $\tg$.}
		\label{app:tgexamplea}
	\end{subfigure}\hspace{2cm}%
	\begin{subfigure}{0.33\linewidth}\centering
		\begin{tikzpicture}
			\node at (0.2, 1.8) {$(2,2)$-core};
			\node at (2.6, 1.8) {$(1,2)$-shell};

			\node[treevertex,color=black,fill=white] (a1) at (0, 1.2) [circle, draw] {$a$};
			\node[treevertex,color=black,fill=white] (c1) at (1.2, 1.2) [circle, draw] {$c$};
			\node[treevertex,color=black,fill=white] (b1) at (0, 0) [circle, draw] {$b$};
			\node[treevertex,color=black,fill=white] (d1) at (1.2, 0) [circle, draw] {$d$};
			
			\node[treevertex,color=black,fill=white] (a2) at (2.4, 1.2) [circle, draw] {$a$};
			\node[treevertex,color=black,fill=white] (c2) at (3.6, 1.2) [circle, draw] {$c$};
			\node[treevertex,color=black,fill=white] (b2) at (2.4, 0) [circle, draw] {$b$};
			\node[treevertex,color=black,fill=white] (d2) at (3.6, 0) [circle, draw] {$d$};

			\draw[edge] (a1) -- (b1) node[midway, left] {$1$};
			\draw[edge] (b1) -- (c1) node[pos=0.5,above,sloped] {$3,8$};
			\draw[edge] (a1) -- (c1) node[midway, above] {$1$};
			\draw[edge] (c1) -- (d1) node[midway, right] {$6$};
			\draw[edge] (d1) -- (b1) node[midway, below] {$6$};            
			\draw[edge] (a2) -- (b2) node[midway, left] {$20$};
			\draw[edge] (a2) -- (c2) node[midway, above] {$22$};
			\draw[edge] (c2) -- (d2) node[midway, right] {$20$};
			\draw[edge] (a2) -- (d2) node[pos=0.5,above,sloped] {$4,10$};
			\draw[edge] (d2) -- (b2) node[midway, below] {$23$};
		\end{tikzpicture}
		\caption{$(k,\Delta)$-cores for $\Delta=2$.}
		\label{app:tgexampleb}
	\end{subfigure}%
	
	\begin{subfigure}{0.33\linewidth}\centering
		\begin{tikzpicture}
			\node at (2.0, 1.8) {$S_\Delta(\tcore^{d_2}_{2})$};

			\node[treevertex,color=black,fill=white] (a1) at (0, 1.2) [circle, draw] {$a,1$};
			\node[treevertex,color=black,fill=white] (c11) at (1.2, 1.2) [circle, draw] {$c,1$};
			\node[treevertex,color=black,fill=white] (c13) at (2, 1.2) [circle, draw] {$c,3$};
			\node[treevertex,color=black,fill=white] (b11) at (0, 0) [circle, draw] {$b,1$};   
			\node[treevertex,color=black,fill=white] (b13) at (0.8, 0) [circle, draw] {$b,3$};   
			\node[treevertex,color=black,fill=white] (c12) at (3.2, 1.2) [circle, draw] {$c,8$};
			\node[treevertex,color=black,fill=white] (c18) at (4, 1.2) [circle, draw] {$c,6$};
			\node[treevertex,color=black,fill=white] (b12) at (2.0, 0) [circle, draw] {$b,8$};
			\node[treevertex,color=black,fill=white] (b18) at (2.8, 0) [circle, draw] {$b,6$};
			\node[treevertex,color=black,fill=white] (d1) at (4, 0) [circle, draw] {$d,6$};

			\draw[edge] (a1) -- (b1) node[midway, left] {};
			\draw[edge] (b13) -- (c13) node[pos=0.5,above,sloped] {};
			\draw[edge] (b12) -- (c12) node[pos=0.5,below,sloped] {};
			\draw[edge] (a1) -- (c1) node[midway, above] {};
			\draw[edge] (c18) -- (d1) node[midway, right] {};
			\draw[edge] (d1) -- (b18) node[midway, below] {};  
			
			\draw[edge,dotted] (c13) -- (c11) node[midway, below] {}; 
			\draw[edge,dotted] (b13) -- (b11) node[midway, below] {};  
			
			\draw[edge,dotted] (c12) -- (c18) node[midway, below] {}; 
			\draw[edge,dotted] (b12) -- (b18) node[midway, below] {}; 
			
		\end{tikzpicture}
		
		\caption{Static $2$-representation of the $(2,2)$-core.}
		\label{app:tgexamplec}
	\end{subfigure}\hspace{2cm}%
	\begin{subfigure}{0.33\linewidth}\centering
		\begin{tikzpicture}
			\node at (1.0, 1.8) {$\Delta$-ccs of the $(2,2)$-core};

			\node[treevertex,color=black,fill=white] (a1) at (0, 1.2) [circle, draw] {$a$};
			\node[treevertex,color=black,fill=white] (c11) at (1.2, 1.2) [circle, draw] {$c$};
			\node[treevertex,color=black,fill=white] (b11) at (0, 0) [circle, draw] {$b$};    \node[treevertex,color=black,fill=white] (c12) at (3.2, 1.2) [circle, draw] {$c$};
			\node[treevertex,color=black,fill=white] (b12) at (2.0, 0) [circle, draw] {$b$};
			\node[treevertex,color=black,fill=white] (d1) at (3.2, 0) [circle, draw] {$d$};

			\draw[edge] (a1) -- (b1) node[midway, left] {$1$};
			\draw[edge] (b11) -- (c11) node[pos=0.5,below,sloped] {$3$};
			\draw[edge] (b12) -- (c12) node[pos=0.5,above,sloped] {$8$};
			\draw[edge] (a1) -- (c1) node[midway, above] {$1$};
			\draw[edge] (c12) -- (d1) node[midway, right] {$6$};
			\draw[edge] (d1) -- (b12) node[midway, below] {$6$};            
		\end{tikzpicture}
		
		\caption{$2$-connected components of the $(2,2)$-core.}
		\label{app:tgexampled}
	\end{subfigure}
	\caption{Examples of the $(k,\Delta)$-core and $(k,\Delta)$-truss decompositions and $\Delta$-connected components.}
\end{figure*}

\section{Data Set Details}\label{appendix:dataset}
\begin{enumerate}%
	\item \emph{FacebookMsg:}\footnote{\label{fn:nr}\url{https://networkrepository.com/index.php}} A data set based on an online social network used by students~\cite{panzarasa2009patterns}.
	The online community at the University of California, Irvine, was designed for social interactions among the students.
	Vertices represent the students and temporal edges messages sent between users.
	\item \emph{Enron:}\footref{fn:nr} An email network between employees of a company~\cite{klimt2004enron}.
	Vertices represent employees and temporal edges emails.
	\item \emph{AskUbuntu:}\footnote{\label{fn:snap}\url{https://snap.stanford.edu/data/}} A network of interactions on the stack exchange website \emph{Ask Ubuntu}~\cite{paranjape2017motifs}.
	Edges represent answers to questions.
	\item \emph{Twitter:}\footnote{\url{https://doi.org/10.5281/zenodo.1154840}} A subgraph of the Twitter graph representing users and retweets in the critical period of six months before the 2016 US Presidential Elections~\cite{shao2018anatomy}.
	\item \emph{Wikipedia:}\footnote{\url{http://konect.cc/networks/}} This network is based on \emph{Wikipedia} pages and hyperlinks between them~\cite{mislove2009online}. The timestamps of the temporal edges represent the times when two pages were linked.
	Each temporal edge represents a retweet and is labeled as factual or malicious news.
	\item \emph{StackOverflow:}\footref{fn:snap} A network of interactions on the stack exchange website \emph{Stack Overflow}~\cite{paranjape2017motifs}.
	Edges represent answers to questions.
	\item \emph{Reddit:}\footnote{\label{fn:corn}\url{https://www.cs.cornell.edu/~arb/data/}} Temporal network of \emph{Reddit} user interactions containing temporal edges reflecting users replying to users at specific times~\cite{hessel2016science}.
	\item \emph{Bitcoin:}\footref{fn:corn} This is a network of bitcoin transactions with the timestamps being the creation time of the block on the blockchain containing the transaction~\cite{Kondor-2014-bitcoin}.
\end{enumerate}

\section{Comparison of the $(k,\Delta)$-Core and $(k,\Delta)$-Truss Decompositions}\label{appendix:corevstruss}
We qualitatively compare the $(k,\Delta)$-core and $(k,\Delta)$-truss decompositions using the \emph{Twitter} network.
\Cref{app:twitter} shows the numbers of nodes and temporal edges in the cores and trusses for $\Delta_{50\%}$ and $\Delta_{75\%}$.
On the $x$-axis is the rank of cores, from high rank, i.e., inner cores, to low rank.
Additionally, we show the clustering coefficient, i.e., the ratio of closed triangles to all (open and closed) triangles~\cite{watts1998collective}. The clustering coefficient is a popular measure of cohesiveness for cores and trusses. 
In all cases, the numbers of edges and nodes increase from high to low ranks as expected.
For the $(k,\Delta_{50\%})$- and $(k,\Delta_{75\%})$-cores, the clustering coefficient is zero for the high-ranked cores and only positive for lower-ranked cores, i.e., the innermost cores do not have closed triangles.
In contrast, for the $(k,\Delta_{50\%})$- and $(k,\Delta_{75\%})$-trusses, the clustering coefficient is high for the higher ranked cores and decreases with the rank.
This shows, as expected, that, similar to the static case, the truss variant leads to spatially more cohesive subgraphs.

\begin{figure}\centering
	\begin{subfigure}{.24\linewidth}\centering
		\includegraphics[width=0.8\linewidth]{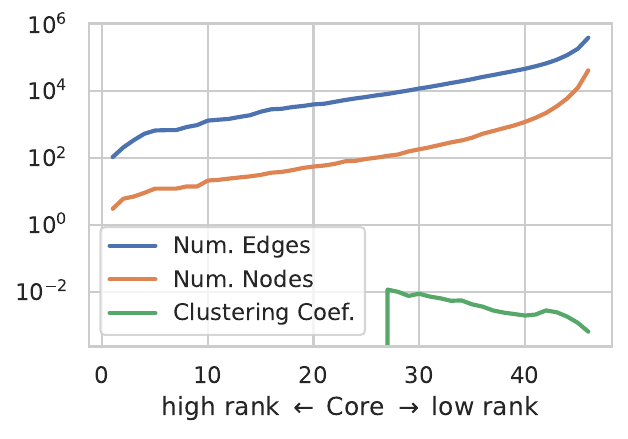}
		\caption{$(k,\Delta_{50\%})$-cores}
	\end{subfigure}\hfill%
	\begin{subfigure}{.24\linewidth}\centering
		\includegraphics[width=0.8\linewidth]{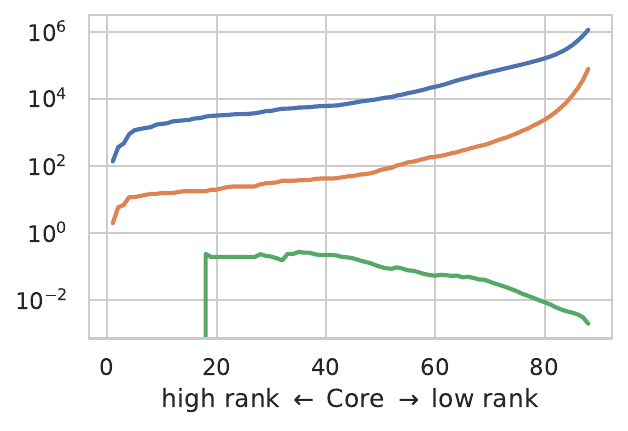}
		\caption{$(k,\Delta_{75\%})$-cores}
	\end{subfigure}\hfill%
	\begin{subfigure}{.24\linewidth}\centering
		\includegraphics[width=0.8\linewidth]{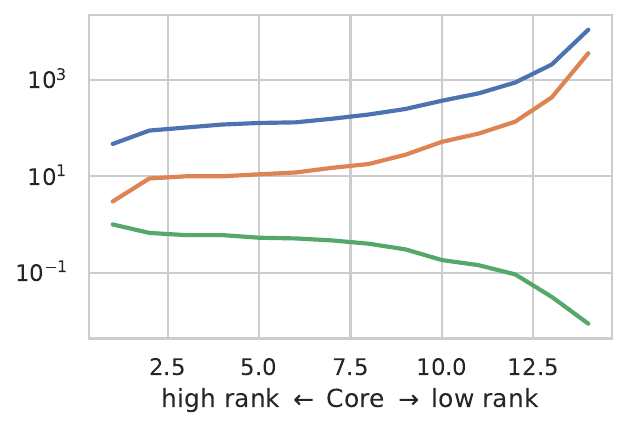}
		\caption{$(k,\Delta_{50\%})$-trusses}
	\end{subfigure}\hfill%
	\begin{subfigure}{.24\linewidth}\centering
		\includegraphics[width=0.8\linewidth]{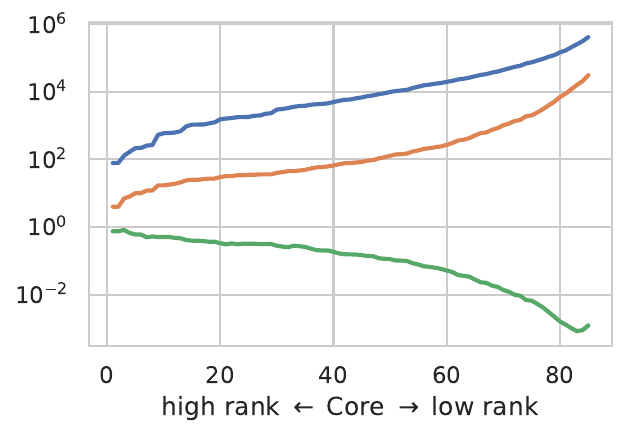}
		\caption{$(k,\Delta_{75\%})$-trusses}
	\end{subfigure}
	\vspace{-3mm}
	\caption{Sizes and clustering coefficient for the $(k,\Delta)$-cores and $(k,\Delta)$-trusses of the \emph{Twitter} data set.}
	\label{app:twitter}
\end{figure}

\end{document}